
\documentclass[11pt, tightenlines, notitlepage, aps, pra, longbibliography, superscriptaddress]{revtex4-1}
\synctex=1



\usepackage[pdfpagelabels,pdftex,bookmarks,breaklinks]{hyperref}
\usepackage{color}
\usepackage{amsmath,amsfonts,amsthm,amssymb}
\usepackage{bbm} 
\usepackage{graphicx}

\usepackage{tikz}

\providecommand{\ket}[1]{\vert #1 \rangle}
\providecommand{\bra}[1]{\langle #1 \vert}

\newenvironment{qcircuit}{\begin{tikzpicture}[qcircuit] }{\end{tikzpicture}}

\def\circleH{.16} 
\def\targetH{.75*\circleH} 
\def\controlH{.375*\circleH} 
\def\vunit{4*\circleH} 

\tikzset{fc/.style={path picture={
	\filldraw[fill=white] (0,0) circle (\circleH);
	}}} 
\tikzset{fcb/.style={path picture={
	\filldraw[fill=black] (0,0) circle (\controlH);
	}}} 
\tikzset{plus/.style={path picture={
	\draw (0,0) circle (\targetH);
	\filldraw[fill=black] (-\targetH,0) -- (\targetH,0) (0,-\targetH) -- (0,\targetH) ;
	}}} 
\tikzstyle{gate} = [draw,fill=white,minimum size=1.5em,inner sep=2pt] 
\tikzstyle{ket} = [fill=white,minimum size=1.5em]
\tikzset{qcircuit/.style={thick, minimum size=3ex}} 

\newcommand{\fcb}{\qw node[fcb] {}} 
\newcommand{\plus}{\qw node[plus] {}} 
\renewcommand{\d}{-- +(0,-\vunit) +(0,0)}
\newcommand{\row}[1]{\draw (0,-#1*\vunit) }
\newcommand{\qw}{-- ++(1,0) }
\newcommand{\gate}[1]{\qw node[gate] {${#1}$}}

\newcommand{\qin}{\qw node[ket] {$\hphantom{\ket{0}}$}}
\newcommand{\qout}{node[ket] {$\hphantom{\bra{0}}$} }

\newcommand{\qket}{\qw node[ket] {$\ket{0}$}}
\newcommand{\qbra}{ node[ket] {$\bra{0}$} }

\definecolor{deeppurple}{RGB}{100,0,120} 
\definecolor{darkgreen}{RGB}{0,150,0}
\definecolor{darkblue}{RGB}{0,0,130}
\definecolor{magenta}{RGB}{200,0,200}
\hypersetup{colorlinks=true, linkcolor=deeppurple, citecolor=darkgreen, filecolor=red, urlcolor=darkblue}
\hypersetup{pdfauthor={Dave Bacon, Steven T. Flammia, Aram W. Harrow, Jonathan Shi}}
\hypersetup{pdftitle={Sparse Quantum Codes from Quantum Circuits}}


\newtheorem{theorem}{Theorem}
\newtheorem{lemma}[theorem]{Lemma}
\newtheorem{definition}[theorem]{Definition}

\renewenvironment{proof}[1][Proof]{\noindent\textbf{#1:} }{\ $\Box$\medskip}

\theoremstyle{remark}
\newtheorem*{remark}{Remark}

\theoremstyle{definition}





\providecommand*{\ket}[1]{|{#1}\rangle}
\providecommand*{\bra}[1]{\langle{#1}|}
\newcommand*{\ketbra}[2]{|{#1}\rangle\!\langle{#2}|}

\newcommand*{\proj}[1]{\ketbra{#1}{#1}}


\newcommand{\cA}{\mathcal{A}}
\newcommand{\cB}{\mathcal{B}}

\newcommand{\cG}{\mathcal{G}}
\newcommand{\cH}{\mathcal{H}}

\newcommand{\cK}{\mathcal{K}}
\newcommand{\cL}{\mathcal{L}}
\newcommand{\cM}{\mathcal{M}}
\newcommand{\cP}{\mathcal{P}}
\newcommand{\cS}{\mathcal{S}}

\newcommand*{\eps}{\epsilon}

\newcommand*{\ED}{\mathrm{ED}}
\DeclareMathOperator{\im}{im}

\DeclareMathOperator{\spack}{spackle}

\newcommand*{\vd}{{\vphantom{\dag}}}

\newcommand{\ot}{\otimes}


\newcommand{\comment}[1]{}


\newcommand{\init}{\text{in}}
\newcommand{\out}{\text{out}}

\def\be#1\ee{\begin{equation}#1\end{equation}}
\def\ba#1\ea{\begin{align}#1\end{align}}
\def\bas#1\eas{\begin{align*}#1\end{align*}}
\def\bit{\begin{itemize}}
\def\eit{\end{itemize}}
\def\benum{\begin{enumerate}}
\def\eenum{\end{enumerate}}
\newcommand{\fig}[1]{Fig.~\ref{fig:#1}}
\newcommand{\eq}[1]{(\ref{eq:#1})}
\newcommand{\secref}[1]{Section~\ref{sec:#1}}

\newcommand{\lemref}[1]{Lemma~\ref{lem:#1}}
\newcommand{\thmref}[1]{Theorem~\ref{thm:#1}}

\newcommand{\defref}[1]{Definition~\ref{def:#1}}

\newcommand{\nc}{\newcommand}
\nc\tleft{\overleftarrow{t}}
\nc\tright{\overrightarrow{t}}
\nc\ra{\rightarrow}
\nc\la{\leftarrow}

\nc{\half}{\tfrac{1}{2}}
\def\bpm#1\epm{\begin{pmatrix}#1\end{pmatrix}}

\def\bbF{\mathbb{F}}

\def\bbZ{\mathbb{Z}}

\def\begsub#1#2\endsub{\begin{subequations}\label{eq:#1}\begin{align}#2\end{align}\end{subequations}}
\nc\qand{\qquad\text{and}\qquad}
\nc\mnb[1]{\medskip\noindent{\bf #1}}

\newcommand{\bqc}[1]{\raisebox{-3pt}{\begin{qcircuit} #1\end{qcircuit}}}

\begin{document}

\title{Sparse Quantum Codes from Quantum Circuits}

\author{Dave Bacon}
\affiliation{Department of Computer Science and Engineering, University of Washington, Seattle, WA 98195 USA}
\altaffiliation[Current affiliation: ]{Google Inc., Mountain View, CA, USA}
\author{Steven T.\ Flammia}
\affiliation{Centre for Engineered Quantum Systems, School of Physics, The University of Sydney, Sydney, NSW, Australia}
\author{Aram W.\ Harrow}
\affiliation{Center for Theoretical Physics, Massachusetts Institute of Technology, Cambridge, MA, USA }
\author{Jonathan Shi}
\affiliation{Department of Computer Science, Cornell University, Ithaca, NY, USA}
\thanks{This work was presented in part at STOC 2015.}

\date{\today}

\begin{abstract}
  We describe a general method for turning quantum circuits into sparse quantum subsystem
  codes. The idea is to turn each circuit element into a set of low-weight gauge
  generators that enforce the input-output relations of that circuit element.  Using this
  prescription, we can map an arbitrary stabilizer code into a new subsystem code with the
  same distance and number of encoded qubits but where all the generators have constant
  weight, at the cost of adding some ancilla qubits. With an additional overhead of
  ancilla qubits, the new code can also be made spatially local.

  Applying our construction to certain concatenated stabilizer codes
  yields families of subsystem codes with constant-weight generators
  and with minimum distance $d = n^{1-\eps}$, where $\eps =
  O(1/\sqrt{\log n})$. For spatially local codes in $D$ dimensions we
  nearly saturate a bound due to Bravyi and Terhal and achieve $d =
  n^{1-\eps-1/D}$.
  Previously the best code distance achievable with
  constant-weight generators in any dimension, due to Freedman, Meyer
  and Luo, was $O(\sqrt{n\log n})$ for a stabilizer code.
\end{abstract}

\maketitle


\section{Introduction}

Sparse quantum error-correcting codes obey the simple constraint that only a constant 
number of qubits need to be measured at a time to extract syndrome bits. Considerable effort 
has been devoted to studying sparse quantum codes, most notably in the context of 
topological quantum error correction~\cite{Terhal2014}. This effort is driven by the fact that the 
sparsity constraint is quite natural physically, and existing fault-tolerant 
thresholds~\cite{Raussendorf2007} and overheads~\cite{Gottesman13} are optimized when 
the underlying code is sparse. Despite this effort, finding families of \emph{good} sparse 
quantum codes -- i.e.\ codes with asymptotically constant rate and relative distance -- remains 
an open problem, in stark contrast to the situation for classical codes 
(see e.g.~\cite{MacKay2003}). 

Quantum subsystem codes~\cite{Poulin2005} form a broad generalization of standard 
stabilizer codes where a subset of logical qubits is sacrificed to allow for extra gauge 
degrees of freedom. The two principle advantages of subsystem codes are that the 
measurements needed to extract syndrome information are in general sparser and the errors 
only need to be corrected modulo gauge freedom, which often improves fault-tolerance 
thresholds~\cite{AC07}. 

In this paper, we consider a general recipe that constructs a sparse quantum subsystem code 
for every Clifford quantum circuit. The new code resembles the circuit in that the layout of the 
circuit is replaced with new qubits in place of the inputs and outputs of each of the circuit 
elements. The gauge generators are localized to the region around the erstwhile circuit 
elements, and thus the sparsity $s$ of the new code is constant when the circuit is composed 
of few-qubit gates. 

When the circuit satisfies a relaxed form of fault-tolerance and implements a syndrome measurement circuit for a 
``base'' quantum stabilizer code encoding $k$ qubits with distance $d$, then the new sparse 
subsystem code \emph{inherits} the $k$ and $d$ parameters of the base code.
We construct circuits of the requisite special form, and from this we show how 
\emph{every} stabilizer code can be mapped into a sparse subsystem code with the same $k$ 
and $d$ as the original base code. 

The number of physical qubits $n$ required for the new code is roughly the circuit size, and 
this can be chosen to be proportional to the sum of the weights of the original stabilizer 
generators, when permitting circuit elements to be spatially non-local.
This result is summarized in our main theorem.

\begin{theorem}\label{thm:main}
Given any $[n_0,k_0,d_0]$ quantum stabilizer code with stabilizer
generators of weight $w_1,\ldots,w_{n_0 - k_0}$, 
there is an associated $[n,k,d]$ quantum subsystem code whose gauge generators have 
weight $O(1)$ and where $k=k_0$, $d=d_0$, and $n = O\bigl(\sum_i w_i\bigr)$. This mapping is 
constructive given the stabilizer generators of the base code.
\end{theorem}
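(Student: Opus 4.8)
The plan is to reduce the theorem to two ingredients: (i) the general circuit-to-subsystem-code construction, which turns any Clifford circuit built from $O(1)$-qubit gates into a subsystem code with $O(1)$-weight gauge generators and one physical qubit per gate-adjacent wire segment; and (ii) an explicit fault-tolerant syndrome-measurement circuit $\mathcal{C}$ for the base code $C_0$, designed so that applying (i) to it produces a code with the claimed parameters.

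First I would build $\mathcal{C}$. For each stabilizer generator $S_i$ of $C_0$, of weight $w_i$, I use a Shor-style gadget: prepare and verify a $w_i$-qubit cat state, apply $w_i$ controlled-Pauli gates coupling the cat qubits to the qubits of $\supp(S_i)$, and measure out the cat qubits. Each gadget uses $O(w_i)$ ancilla qubits, $O(w_i)$ one- and two-qubit gates, and a constant number of time steps, and a qubit that is idle across a stretch of time steps is represented only once; so the circuit-to-code map applied to the sequence (or a pipelined arrangement) of all $n_0 - k_0$ gadgets yields $n = O(n_0 + \sum_i w_i)$. The circuit records the syndrome of $C_0$ while acting as the identity on its codespace. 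The essential design requirement is that $\mathcal{C}$ be \emph{fault tolerant}: any pattern of $f < d_0$ elementary faults must be equivalent, on the data qubits and modulo the stabilizer, to a Pauli error of weight at most $f$; the cat-state layout, together with verification, is chosen precisely so that a single ancilla fault cannot spread to more than one data qubit.

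Applying the circuit-to-code map to $\mathcal{C}$ then gives sparsity $O(1)$, since each gate acts on $O(1)$ qubits and so contributes only a constant number of constant-weight gauge generators. The identity $k = k_0$ should follow from the dimension count for the circuit-to-code map: the surviving logical degrees of freedom are exactly those threaded through $\mathcal{C}$ from input to output, and a syndrome-measurement circuit preserves the logical content of $C_0$, so this count is $k_0$.

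The core of the argument --- and the step I expect to be the main obstacle --- is the distance equality $d = d_0$. For the upper bound $d \le d_0$ I would produce an explicit logical operator by tracking a weight-$d_0$ logical operator of $C_0$ through $\mathcal{C}$ and reading off the resulting ``space-time'' Pauli, which commutes with the gauge group but lies outside it; some care is needed to pin its weight down to exactly $d_0$. For the lower bound $d \ge d_0$ I would take any nontrivial logical operator $\bar{L}$ of the subsystem code and use the gauge generators --- the ``wire'' generators that transport a Pauli between adjacent time steps and the ``gate'' generators that conjugate it through a gate --- to clean $\bar{L}$ toward the circuit's input, where it descends to a nontrivial logical class of $C_0$. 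Fault tolerance of $\mathcal{C}$ then forces the original weight of $\bar{L}$ to be at least $d_0$, since a smaller-weight space-time Pauli is a fault pattern of fewer than $d_0$ faults, which $\mathcal{C}$ propagates to a correctable data error and which therefore cannot be logically nontrivial. Making the cleaning step rigorous --- ruling out weight ``hidden'' between time slices and checking that the fault-tolerance guarantee of $\mathcal{C}$ transfers faithfully to the space-time picture --- is where I expect most of the technical effort to lie, and is presumably where the special structure imposed on $\mathcal{C}$ earlier in the paper is used.
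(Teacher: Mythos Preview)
Your overall plan matches the paper: build a fault-tolerant error-detecting circuit from per-stabilizer measurement gadgets, then apply the circuit-to-code construction, and derive $d\ge d_0$ by cleaning any dressed logical operator back to the input time slice using the gauge generators (what the paper calls the ``squeegee'' lemma) and invoking the gadget's fault-tolerance. The paper also obtains $d\le d_0$ trivially by placing a minimum-weight logical of $C_0$ on the time-$0$ data qubits, so you need not ``track it through the circuit.''

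The genuine gap is in the gadget itself. A Shor-style verified cat state does \emph{not} satisfy the fault-tolerance property the distance proof needs (namely: for every error pattern $E$, either $V_E=0$ or $V_E=VE'$ with $|E'|\le |E|$) while keeping size $O(w)$. The usual single-fault criterion ``one ancilla fault spreads to at most one data qubit'' is too weak here; the definition must hold for \emph{all} patterns, not just single faults. Concretely, standard linear-chain verification corresponds, in the paper's language, to a path graph of parity checks, and a single $X$ on the $k$th verification ancilla is gauge-equivalent (via the gate relations) to $X_1\cdots X_k$ on the data block. Thus one ancilla error of weight $1$ produces a data error of weight $\min(k,w-k)$, violating the required inequality and destroying the $d\ge d_0$ argument. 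This is exactly why the paper replaces the cat-state-plus-verification idea with a parity-check structure coming from a bounded-degree \emph{expander} graph on $w$ vertices: edge expansion $\ge 1$ guarantees that any subset of data errors of size $\le w/2$ has at least as many incident parity checks, which is what makes $|E'|\le |E|$ go through with only $O(w)$ ancillas and $O(1)$-local gates. The paper remarks that the straight Shor approach would cost $O(w^2)$ (and would need non-Clifford majority voting in the correction setting), which is precisely the overhead you are implicitly assuming away. A minor related slip: preparing a $w$-qubit cat state with one- and two-qubit gates cannot be done in a constant number of time steps.
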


The proof is in \secref{ft-gadgets}. There are two intermediate
results: \thmref{dist-from-FT} shows the validity of our 
circuit-to-code construction and \thmref{ftgadget} constructs an
efficient measurement gadget that satisfies the relaxed fault-tolerance condition,
using expander graphs.  While expander graphs
have played an important role in classical error correction, to our
knowledge this is their first use in quantum error correction.

We then demonstrate the power of \thmref{main} by applying it to two natural scenarios: 
first to concatenated codes and then to spatially local codes. By applying our construction to 
concatenated stabilizer codes, we obtain families of sparse subsystem codes with by far the 
best distance to date. The previous best distance for a sparse quantum code was due to 
Freedman, Meyer, and Luo~\cite{FML02}, who constructed a family of stabilizer codes 
encoding a single logical qubit having minimum distance $d=O(\sqrt{n\log n})$. Our 
construction provides for the following improvement in parameters.

\begin{theorem}\label{thm:catcodes}
Quantum error correcting subsystem codes exist with gauge generators of weight $O(1)$ and 
minimum distance $d = n^{1-\epsilon}$ where $\epsilon = O(1/\sqrt{\log{n}})$.
\end{theorem}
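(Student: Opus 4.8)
The plan is to apply \thmref{main} to a well-chosen family of \emph{concatenated} stabilizer codes and then to trade the concatenation depth off against the block size of the base code. First I would fix an absolute constant $c>0$ and a family of stabilizer codes $C_m$ on $m$ qubits, each encoding one logical qubit with minimum distance $d_m \ge cm$; such linear-distance codes exist by the quantum Gilbert--Varshamov bound, and it is important that \thmref{main} imposes no sparsity or locality requirement on the base code, so it is harmless that the stabilizer generators of $C_m$ can themselves have weight up to $m$. Let $C_m^{(\ell)}$ denote the code obtained by concatenating $C_m$ with itself $\ell$ times. By the standard properties of concatenation, $C_m^{(\ell)}$ has $m^\ell$ physical qubits, one logical qubit, and minimum distance at least $d_m^\ell \ge (cm)^\ell$.

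Next I would bound the total stabilizer weight of $C_m^{(\ell)}$, since this controls the overhead in \thmref{main}. Every stabilizer generator of $C_m^{(\ell)}$ is a ``level-$j$'' generator for some $1 \le j \le \ell$: it is one of the $m-1$ stabilizer generators of $C_m$ applied across the $m$ level-$(j-1)$ subblocks that make up a level-$j$ block, so it is supported inside that single level-$j$ block of $m^j$ qubits and has weight at most $m^j$. Since there are $m^{\ell-j}$ level-$j$ blocks and $m-1$ such generators per block, and $(m-1)m^{\ell-j}\cdot m^j < m^{\ell+1}$, summing over $j$ gives total stabilizer weight $\sum_i w_i = O(\ell m^{\ell+1})$. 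Feeding $C_m^{(\ell)}$ into \thmref{main} then produces a subsystem code with gauge generators of weight $O(1)$, with $k = 1$, with minimum distance $d \ge (cm)^\ell$, and with $n = O\bigl(m^\ell + \ell m^{\ell+1}\bigr) = O(\ell m^{\ell+1})$.

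It remains to choose $m$ and $\ell$ to make $\epsilon := 1 - \log d / \log n$ small. Writing $\epsilon = (\log n - \log d)/\log n$ and using $\log d \ge \ell\log m - \ell\log(1/c)$ together with $\log n = \ell\log m + O(\log m + \log\ell)$, the numerator is $O(\log m + \log\ell + \ell\log(1/c))$ and the denominator is $\log n = \Theta(\ell\log m)$, so that (to leading order, with $c$ fixed) $\epsilon$ is of order $\tfrac1\ell + \tfrac1{\log m}$. Balancing the two terms subject to $\ell\log m \approx \log n$ points to $\ell = \Theta(\sqrt{\log n})$ and $\log m = \Theta(\sqrt{\log n})$ --- concretely one can take $m \approx (1/c)^{\ell}$ --- both of which grow without bound along an infinite sequence of lengths $n$, so a valid base code $C_m$ is eventually available. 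For this choice the numerator above is $O(\sqrt{\log n})$ while the denominator is $\log n$, giving $\epsilon = O(1/\sqrt{\log n})$ with an absolute constant, since all implied constants depend only on $c$ and on the overhead constant in \thmref{main}.

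The two points I expect to need the most care are, first, pinning down that the base family really attains distance $\Omega(m)$ and that concatenation behaves as stated --- that distances multiply (so the distance of $C_m^{(\ell)}$ is at least $d_m^\ell$) and that each stabilizer generator stays inside one level-$j$ block --- since these structural facts do essentially all the work; and second, making sure no hidden dependence on $m$ or $\ell$ leaks into the constant in $\epsilon = O(1/\sqrt{\log n})$. The latter is precisely why it matters that $c$ is an absolute constant, that the deliberately lossy per-generator bound of $m^j$ on the weight is still good enough, and that the lower-order $O(\log m + \log\ell)$ terms in $\log n$ wash out once $m$ and $\ell$ are large.
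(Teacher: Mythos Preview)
Your proposal is correct and follows essentially the same route as the paper: fix a linear-distance base code from the quantum Gilbert--Varshamov bound, concatenate it with itself, bound the total stabilizer weight level by level as $O(\ell m^{\ell+1})$, apply \thmref{main}, and then balance the concatenation depth against the base block size to get $\epsilon = O(1/\sqrt{\log n})$. The paper's variables $(n_0,\delta,m)$ correspond to your $(m,c,\ell)$, and its optimal choice $m = O(\sqrt{\log n/\log(1/\delta)})$ is exactly your balancing $\ell \approx \log m \approx \sqrt{\log n}$.
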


It is natural to ask if our construction can also be made spatially local. By spatially local we mean that all of the qubits can be arranged on the vertices of a square lattice in $D$ dimensions with each gauge generator having support in a region of size $O(1)$. Incorporating spatial locality is indeed also possible, though it will in general increase the size of the circuit we use, and hence the total number of qubits in the subsystem code. 

\begin{theorem}\label{thm:localcodes}
Spatially local subsystem codes exist in $D\ge 2$ dimensions with gauge generators of weight 
$O(1)$ and minimum distance $d = n^{1-\eps-1/D}$ where $\eps = O(1/\sqrt{\log{n}})$.
\end{theorem}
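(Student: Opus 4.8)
The plan is to reduce to the non-local construction of \thmref{catcodes} and then pay for spatial locality by re-laying-out the underlying circuit with nearest-neighbor \textsc{swap} networks. The enabling observation is that the circuit-to-code map of \thmref{dist-from-FT} is geometry preserving: each physical qubit of the subsystem code sits on a wire segment of the circuit and each gauge generator is supported on the wires entering or leaving a single gate, so if the circuit is embedded in $\R^D$ with every gate acting on a cluster of wires of diameter $O(1)$, the resulting code is automatically spatially local with $O(1)$-weight gauge generators. It therefore suffices to take the (non-local) Clifford circuit $C$ underlying the code family of \thmref{catcodes} --- obtained by feeding a suitable concatenated base code into \thmref{main} --- and to turn $C$ into a geometrically $D$-local circuit $C'$ realizing the same logical action, while controlling the growth of $|C'|$; the new code will then have $n = O(|C'|)$ physical qubits while still inheriting $k = k_0$ and $d = d_0$ from the base code.

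For the re-layout, fix $D \ge 2$ and let $W$ and $T$ denote the width and depth of $C$, which we may take to satisfy $WT = O(|C|)$ up to polylogarithmic factors (pruning unused wires and compressing layers where possible; the circuit produced by \thmref{main} is of this form). Place the $W$ wires of a single time slice on a $(D-1)$-dimensional grid of side $O(W^{1/(D-1)})$ and use the remaining coordinate for time. Before each layer of $C$, insert a nearest-neighbor routing subcircuit --- e.g.\ a $(D-1)$-dimensional sorting/odd--even transposition network --- of depth $O(W^{1/(D-1)})$ that permutes the wires so that the two-qubit gates of that layer act on grid-adjacent wires; \textsc{swap} is included among the allowed $O(1)$-qubit gates, and the overall permutation is tracked so that $C'$ computes the same unitary as $C$. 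The resulting $C'$ is geometrically $D$-local and has size $|C'| = O\big(W^{1/(D-1)} \cdot WT\big) = O\big(|C|^{1 + 1/(D-1)}\big)$ up to polylogarithmic factors, since $W \le |C|$.

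Applying \thmref{dist-from-FT} (equivalently \thmref{main}) to $C'$ now yields a spatially local subsystem code in $D$ dimensions with $O(1)$-weight gauge generators, $k = k_0$, $d = d_0$, and $n = O(|C'|)$. Writing $n'$ for the number of qubits in the non-local code of \thmref{catcodes}, we have $n' = \Theta(|C|)$ by construction, hence $n = O\big((n')^{D/(D-1)}\big)$ up to polylog factors, i.e.\ $n' = \Omega\big(n^{(D-1)/D}\big) = \Omega\big(n^{\,1 - 1/D}\big)$ after absorbing those factors. Substituting into the relation $d = (n')^{1-\eps'}$ of \thmref{catcodes}, with $\eps' = O(1/\sqrt{\log n'}) = O(1/\sqrt{\log n})$, gives $d = \Omega\big(n^{(1-1/D)(1-\eps')}\big) = n^{\,1 - 1/D - \eps}$ for some $\eps = O(1/\sqrt{\log n})$, as claimed. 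Here the hypothesis $D \ge 2$ is exactly what allows one coordinate to carry time while leaving $D - 1 \ge 1$ coordinates for the wire snapshot; for comparison the Bravyi--Terhal bound reads $d = O(n^{1-1/D})$, so this family is optimal up to the $n^{\eps} = n^{o(1)}$ factor.

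The main obstacle is not the geometry but checking that the inserted \textsc{swap} gates preserve whatever special structure of $C$ is needed for \thmref{dist-from-FT} to apply --- that is, that $C'$ is still of the requisite fault-tolerant form and that its distance-inheritance argument still goes through. Since \textsc{swap} is a benign Clifford amounting to a relabeling of wires, this should hold, but one must verify that the measurement gadgets and the bookkeeping of logical operators commute with these permutations. A secondary, purely quantitative issue is to confirm that every overhead incurred along the way --- the depth and sparsity of $C$, the routing networks, and the $O(1)$-factor blow-ups hidden in \thmref{main} --- is only $n^{o(1)}$ and is hence swallowed by $\eps$; this is automatic because $n^{\Theta(1/\sqrt{\log n})} = e^{\Theta(\sqrt{\log n})}$ grows faster than any power of $\log n$.
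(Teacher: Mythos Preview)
Your approach is correct and matches the paper's: both make the error-detection circuit for the concatenated base code geometrically local by inserting nearest-neighbor \textsc{swap} routing networks, paying a depth overhead of $O(W^{1/(D-1)})$ and hence a size blow-up of $|C'| = O(|C|^{D/(D-1)})$, which converts the $n^{1-\eps}$ distance of \thmref{catcodes} into $n^{1-1/D-\eps}$. The paper organizes this via two auxiliary lemmas (one showing that any $O(1)$-sparse set of commuting subcircuits can be enacted locally with depth $O(h+n^{1/(D-1)})$, one embedding a single weight-$w$ gadget locally with depth $O(w^{1/(D-1)})$) and then exploits a self-similar ``Sudoku'' layout of the concatenated code so that routing at level $j$ is confined to cells of size $n_0^j$; but since the top level dominates anyway, your more direct global-routing argument yields the same asymptotics.

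Your flagged ``main obstacle'' is not actually an obstacle: a subcircuit consisting solely of \textsc{swap} and identity gates is trivially fault-tolerant in the sense of \defref{FT}, since pushing any error pattern through a permutation of wires preserves its weight exactly, and then \lemref{composability} handles the interleaving with the measurement gadgets. The one point that deserves a sentence of justification is your assertion that $WT = O(|C|)$ for the circuit underlying \thmref{catcodes}: this requires scheduling the gadgets so that the $n_0^{m-j}$ cells at each concatenation level are measured in parallel (they act on disjoint data blocks), which gives width $W=O(n_0^m)$ and depth $T=O(m n_0)$ and hence $WT = O(m n_0^{m+1}) = O(|C|)$; a purely sequential schedule would not suffice.
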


Although the spatial locality constraint comes at the cost of decreased performance in the rate 
and relative distance, this scaling of the distance is nearly optimal. Several upper bounds have 
been proven about the parameters of spatially local subsystem codes in $D$ dimensions. For 
this case, Bravyi and Terhal~\cite{Bravyi2009} have shown that $d \le O(n^{1-1/D})$. Our 
codes nearly saturate this bound and have the virtue that they are in general constructive. In 
particular, our codes in $D=3$ dimensions already improve on the previous best results (by 
Ref.~\cite{FML02} again) for \emph{arbitrary} sparse codes and achieve $d = n^{2/3-\eps}$ 
for $\eps = O(1/\sqrt{\log{n}})$.

Furthermore, for the class of local commuting projector codes in $D$ dimensions 
(a class that generalizes stabilizer codes, but does \emph{not} contain general subsystem 
codes), Bravyi, Poulin, and Terhal~\cite{Bravyi2010a} have shown the inequality 
$k d^{2/(D-1)} \le O(n)$. It is open whether a similar upper bound holds for subsystem codes,
but a corollary of our main results is that there are spatially local subsystem codes 
for every $D\ge 2$ that obey $k d^{2/(D-1)} \ge \Omega(n)$. 

The remainder of the paper is organized as follows. In Section~\ref{S:background} we review 
the theory of subsystem codes and the prior art. We define the construction for our 
codes in Section~\ref{S:construction} and review the relevant properties of the construction in 
Section~\ref{sec:properties}. Those sections provide a proof of Theorem~\ref{thm:main} 
conditional on the existence of certain fault-tolerant circuits for measuring stabilizer-code 
syndromes, which we subsequently show exist in Section~\ref{sec:ft-gadgets}, thus 
completing the proof. Sections~\ref{S:nonlocal} and \ref{S:local} are devoted to the proofs of 
Theorems~\ref{thm:catcodes} and \ref{thm:localcodes} respectively, and we conclude with a 
discussion of open problems in Section~\ref{S:conclusion}.


\section{Background and Related Work}\label{S:background}
\subsection{Quantum Subsystem Codes}

For a system of $n$ qubits, we can consider the group $\cP^n$ of all
$n$-fold tensor products of single-qubit real-valued Pauli operators $\{I, X, iY,
Z\}$ and including the phases $\{\pm1\}$.   
A \emph{stabilizer code} (see e.g.~\cite{Nielsen2000})
is the joint $+1$ eigenspace of a group of commuting Pauli operators
$\cS = \langle S_1,\ldots,S_l\rangle$, where the $S_i$ label a
generating set for the group. (To avoid trivial codes, we require that
$-I \not\in \cS$.)  
If each of the $l$ generators are independent, then
the code space is $2^{k}$-dimensional where $k = n-l$, and there exist
$k$ pairs of \emph{logical operators} which generate a group $\cL =
\langle X_1, Z_1, \ldots, X_k, Z_k \rangle$.  In general, the logical
group is isomorphic to $C(\cS)/\cS$ where $C(\cS)$ is the centralizer
of $\cS$ in $\cP^n$, meaning the set of all Pauli operators that commute with each element of $\cS$.
The logical group is isomorphic to $\cP^k$, meaning that for each logical
operator in $\cL$ we have that $[L_i, L_j ] = 0$ for all $i\not=j$,
and $X_i Z_i = - Z_i X_i$ for all $i$.  The fact that $\cL \subseteq
C(\cS)$ means that 
$[L_i,S_j] =0$ for all $S_j \in \cS$.
The \emph{weight} of a Pauli operator is the number
of non-identity tensor factors, and the \emph{distance} of a code is
the weight of the minimum weight element among all possible
non-trivial logical operators (i.e.\ those which are not pure
stabilizers).

A \emph{subsystem code}~\cite{Poulin2005, Kribs2006} is a generalization of a stabilizer 
code where we ignore some of the logical qubits and treat them as ``gauge'' degrees of freedom. 
More precisely, in a subsystem code the stabilized subspace $\cH_S$ further 
decomposes into a tensor product $\cH_S = \cH_L \ot \cH_G$, where by convention we still 
require that $\cH_L$ is a $2^k$-dimensional space, and the space $\cH_G$ contains the 
unused logical qubits called gauge qubits. The gauge qubits give rise to a \emph{gauge group} 
$\cG$ generated by former logical operators $G_i$ (which obey the Pauli algebra 
commutation relations for a set of qubits) together with the stabilizer operators. We note that 
$-I$ is always in the gauge group, assuming that there is at least one
gauge qubit. The logical operators in a subsystem code are given by 
 $\cL = C(\cG)$ and still preserve the code space.  The center of the
 gauge group $Z(\cG)$ is defined to be the subgroup of all elements in
 $\cG$ that commute with everything in $\cG$.  Since $Z(\cG)$ contains
 $-I$, it cannot be the set of stabilizers for any nontrivial
 subspace.  Instead we define the stabilizer subgroup
 $\cS$ to be isomorphic to $Z(\cG)/\{\pm I\}$.  Concretely, if
 $Z(\cG)$ has generators $\langle -I, S_1,\ldots,S_l\rangle$ then we
 define the stabilizer group to be $\langle \eps_1 S_1,\ldots, \eps_l
 S_l\rangle$ for some arbitrary choice of $\eps_1,\ldots,\eps_l \in
 \{\pm 1\}$.  
While we focus on codes, define more generally a ``stabilizer subspace'' to be the
simultaneous $+1$-eigenspace of some set of commuting Pauli operators.  

A classic example of a subsystem code is the Bacon-Shor code~\cite{Bacon2006} 
having physical qubits on the vertices of an $L\times L$ lattice (so $n = L^2$). 
The gauge group is generated by neighboring pairs of $XX$ and $ZZ$ operators across the 
horizontal and vertical links respectively. The logical quantum information is encoded by a 
string of $X$ operators along a horizontal line and a string of $Z$ operators along a vertical 
line, and the code distance is $L = \sqrt{n}$. 

We differentiate between two types of logical operators in a subsystem code: \emph{bare}
logical operators are those that act trivially on the gauge qubits,
while \emph{dressed} logical operators may in general act nontrivially
on both the logical and gauge qubits.  In other words, the bare
logical group is $C(\cG) / \cS$ while the dressed logical group is $C(\cS)/\cS$.
The distance of a subsystem code is the minimum weight among
all nontrivial dressed logical operators, i.e. $\min\{|g| : g\in
C(\cS)-\cS\}$.  We say that a code is a $[n,k,d]$ code if it uses $n$  
physical qubits to encode $k$ logical qubits and has distance $d$.

\subsection{Sparse Quantum Codes and Related Work}\label{S:related}

The sparsity of a code is defined with respect to a given set of gauge generators. If each 
generator has weight at most $s_g$ and each qubit partakes in at most $s_q$ generators, 
then we define $s = \max\{s_g,s_q\}$ and say the code is $s$-sparse. We call a code family 
simply \emph{sparse} if $s=O(1)$. 

The most important examples of sparse codes are topological stabilizer
codes, also called homology codes because of their relation to
homology theory. The archetype 
for this code family is Kitaev's toric code~\cite{Kitaev2003}, which encodes $k=O(1)$ qubits 
and has minimum distance $d=O(\sqrt{n})$ (although it can correct a constant fraction of 
random errors). It is known that 2$D$ homological codes obey 
$d \le O(\sqrt{n})$~\cite{Fetaya2012}. Many other important examples of such codes are 
known; see Ref.~\cite{Terhal2014} for a survey. 

The discovery of subsystem codes~\cite{Poulin2005, Kribs2006} led to the study of sparse 
subsystem codes, first in the context of topological subsystem codes, of which there are now 
many examples~\cite{Bacon2006, Bombin2010, Crosswhite2010, Brell2011, Suchara2011, Sarvepalli2012, Bravyi2012a, Bombin2014, Brell2014}. 
These codes are all concerned with the case $k=O(1)$ and all achieve distance
$d=O(\sqrt{n})$.   

Work on codes with large $k$ initially focused on random codes, where it was shown that 
random stabilizers have $k, d \propto n$~\cite{Calderbank1996, Calderbank1998, ABKL00}, 
and more recently that short random circuits generate good codes (meaning that $k$ and $d$
are both proportional to $n$)~\cite{Brown2013}. 
There are also known constructive examples of good stabilizer codes such as those 
constructed by Ashikhmin, Litsyn, and Tsfasman~\cite{Ashikhmin2001} and 
others~\cite{Chen2001, Chen2001a, Matsumoto2002, Li2009}. All of these codes have 
stabilizer generators with weight $\propto n$, however.

A growing body of work has made simultaneous improvement on increasing
$k$ and $d$ while keeping the code sparse. The best distance known to
be achievable with a sparse code is due to Freedman, Meyer and
Luo~\cite{FML02}, encoding a single qubit with distance
$O(\sqrt{n\log n})$.  The only other candidate for beating distance $\sqrt{n}$ with a
sparse code is Haah's cubic
code~\cite{Haah11} which achieves distance $n^\alpha$ where
currently the only known bounds for $\alpha$ are that
$0.01 \leq \alpha \leq 0.56$~\cite{BravyiH11,BravyiH13,Haah-email}.  A
different construction called hypergraph product codes by Tillich and
Z\'{e}mor~\cite{TZ09} achieves a distance of $O(\sqrt{n})$ but with
constant rate. These codes, like the toric code, can still correct a
constant fraction of random errors~\cite{KP13} but they abandon
spatial locality in general. 

Some notion of spatial locality can be recovered by working with more
exotic geometries than a simple cubic lattice in Euclidean
space. Z\'{e}mor constructed a family of hyperbolic surface codes with
constant rate and logarithmic distance~\cite{Zemor2009}; see
also~\cite{Kim2007a}. Guth and Lubotzky~\cite{GL13} have improved this
by constructing sparse codes with constant rate and
$d=O(n^{3/10})$. 
Hastings~\cite{Hastings14} has shown that the 4D toric code in hyperbolic space 
can be made to have a constant rate with a distance of only $d=O(\log(n))$, 
but with an efficient decoder.
These codes and those of Ref.~\cite{FML02} live most
naturally on cellulations of Riemannian manifolds with non-Euclidean
metrics and unfortunately cannot be embedded into a cubic lattice in
$D\le 3$ without high distortion.

The Bacon-Shor codes~\cite{Bacon2006} mentioned in the previous section were 
generalized by Bravyi~\cite{Bravyi2011b} to yield a family of sparse subsystem codes 
encoding $k \propto \sqrt{n}$ qubits while still respecting the geometric locality of the gauge 
generators in $D=2$ dimensions and maintaining the distance $d= \sqrt{n}$. This is an 
example of how subsystem codes can outperform stabilizer codes under spatial locality 
constraints, since two-dimensional stabilizer codes were proven in~\cite{Bravyi2011b} to 
satisfy $kd^2 \leq O(n)$ (which generalizes~\cite{Bravyi2010a} in $D$ dimensions to 
$k d^{2/(D-1)} \le O(n)$). Bravyi~\cite{Bravyi2011b} has also shown that all 
spatially local subsystem codes in $D=2$ dimensions obey the bound $k d \le O(n)$ for $D=2$ 
and so this scaling is optimal for two dimensions.

A family of $O(\sqrt{n})$-sparse codes called homological product codes, due to Bravyi and 
Hastings~\cite{BH13}, leverage random codes with added structure to create good stabilizer 
codes with a nontrivial amount of sparsity, but no spatial locality of the generators. 

By way of comparison, classical sparse codes exist that are able to achieve linear rate and distance, and can be encoded and decoded from a constant fraction of errors in linear time~\cite{Spielman96}. 


\section{Constructing the Codes}\label{S:construction}

In this section we describe how our new codes are constructed.  Our
codes are built from existing stabilizer codes, and indeed our
construction can be thought of as a recipe for sparsifying
stabilizer codes. Our requirements for the initial base code are
described in \secref{base}. Then our construction is presented in
\secref{construction}. 

\subsection{The Base Code and Error-Detecting Circuits}\label{sec:base}

Our code begins with an initial code called $C_0$ which is a
stabilizer code with stabilizer group $\cS_0$ and logical group $\cL_0$. By a slight abuse 
of notation, we use $C_0$ to also refer to the actual code space. It uses $n_0$ qubits to 
encode $k_0$ logical qubits with distance $d_0$. Assume that there exists an 
error-detecting circuit consisting of the following elements:
\bit
	\item A total of $n_a$ ancilla qubits initialized in the $\ket 0$ state.
	\item A total of $n_0$ data qubits.
	\item A Clifford unitary $U_{\ED}$ applied to the data qubits and ancillas.
	\item Single-qubit postselections onto the $\ket 0$ state.
\eit
Denote the resulting operator $V_{\ED}$. By ordering the qubits appropriately we have
\be V_{\ED} = 
(I^{\ot n_0} \ot \bra{0}^{\ot n_a}) U_{\ED}(I^{\ot n_0} \ot \ket{0}^{\ot n_a})\,.
\ee
This satisfies $V^\dag_{\ED} V^\vd_{\ED} \leq I$ automatically.  
\begin{definition}\label{def:error-detect}
A circuit $V_{\ED}$ is a \emph{good error-detecting circuit} for $C_0$ if $V_{\ED}^{\dag} V_{\ED}^{\vd}$ is proportional (with nonzero constant of proportionality) to the projector onto $C_0$.
\end{definition}
This means that the circuit always rejects states orthogonal to $C_0$ and accepts every state in $C_0$ with the same nonzero probability, assuming no errors occur while running the circuit. 
In other words, 
\be
V_{\ED}^\dag V^\vd_{\ED} = 
(I^{\ot n_0} \ot \bra{0}^{\ot n_a}) U_{\ED}^\dag (I^{\ot n_0} \ot \proj{0}^{\ot n_a})
U^\vd_{\ED} (I^{\ot n_0} \ot \ket{0}^{\ot n_a})
\propto  \frac{1}{|\cS_0|}\sum_{s \in \cS_0} s \,.
\label{eq:VV-proj}\ee

We allow the initializations and postselections to occur at staggered times across the circuit, 
so that the circuit is \emph{not} simply a rectangular block in general.
Describing this in sufficient detail for our purposes will require somewhat cumbersome notation.
All initializations, postselections, and elementary gates take place at \emph{half-integer time steps}.
Thus, a single-qubit gate acting  at time, say, $t = 2.5$, can be thought of as mapping the state from
time $t = 2$ to $t=3$.
The $i$th qubit is input or initialized at time $T^\init_i-0.5$ and output or measured at time $T^\out_i+0.5$. 
The total depth of the circuit is then $\max_i T^\out_i - \min_i T^\init_i + 1$.

We defer a discussion of fault-tolerance in our circuits until Sec.~\ref{sec:ft-gadgets}.

\subsection{Localized codes}\label{sec:construction}

To construct our code, we place a physical qubit at each
integer spacetime location in the circuit. Thus, each wire of
the circuit now supports up to $T$ physical qubits, and in general the
$i$th wire will hold $T^\out_i - T^\init_i +1$ physical qubits.
Assume each $T^\init_i \geq 0$ and let $T = \max_i T^\out_i$.  Then
each qubit is active for some subset of times $\{0,\ldots,T\}$.   In
some of our analysis it will be convenient to pretend that each qubit
is present for the entire time $\{0,\ldots,T\}$, and that all
initializations and measurements happen at times -0.5 and $T+0.5$
respectively.  During the ``dummy'' time steps the qubits are acted
upon with identity gates.  It is straightforward to see that the code
properties (except for total number of physical qubits) are identical
with or without these dummy time steps.  Thus, we will present our
proofs as though dummy qubits are present, but will perform our
resource accounting without them.

\begin{table}[t!]
\renewcommand\arraystretch{2.8}
\renewcommand\tabcolsep{10pt}
\begin{center}
\begin{tabular}{cc}
\hline\hline
\textbf{Circuit element} & \textbf{Gauge generators} \\ 
\hline 
    \bqc{\row{0} \qout \gate{I} \qin;} & \raisebox{3pt}{$XX, ZZ$} \\
    \bqc{\row{0} \qout \gate{H} \qin;} & \raisebox{3pt}{$ZX, XZ$} \\
    \bqc{\row{0} \qout \gate{\sqrt{Z}} \qin;} & \raisebox{3pt}{$YX, ZZ$} \\
    \raisebox{-12pt}{\bqc{\row{0} \qout \fcb\d \qin; \row{1} \qout \plus \qin;}} & 
	${XX \atop \!X\, I}, {\,I \,\,I \atop XX}, {ZZ \atop I \,I }, {Z\, I \atop ZZ}$ \\
    \bqc{\row{0} \qbra \qw;} & \raisebox{3pt}{$Z$}\\
    \hspace{8pt}\bqc{\row{0} \qket;} & \raisebox{3pt}{$Z$} \\
\hline\hline
\end{tabular}
\end{center}
\renewcommand\arraystretch{1}
\caption{Dictionary for transforming circuit elements into generators of the gauge group. For every input and output of a circuit element in the left column, we add the corresponding generators from the right column, placed on the appropriate physical qubits. (This is the purpose of the $\eta^i_t$ map in the main text.) We only list the gauge generators for the standard generators of the Clifford group, but the circuit identities of any Clifford circuit can be used instead. Pre- and postselections are special and have only one gauge generator associated to them.}
\label{T:dictionary}
\end{table}

We introduce the function $\eta^i_t(P)$ to denote placing a Pauli $P$ at
spacetime position $(i,t)$.  If $P$ is a multi-qubit Pauli then we
let $\eta^i(P)$ or $\eta_t(P)$ denote placing it either on row $i$
coming from circuit qubit $i$ or on column $t$ corresponding to
circuit time slice $t$.  For a two-qubit gate $U$, we write
$\eta^{i,j}_t(U)$ to mean that we place $U$ at locations $(i,t)$ and
$(j,t)$.  When describing a block of qubits without this spacetime
structure, we also use the more traditional notation of $P_i$ to denote
Pauli $P$ acting on position $i$: that is, $P_i := I^{\ot i-1} \ot P
\ot I^{\ot n-i}$, where $n$ is usually understood from context.

With this notation in hand, we define the gauge group of our codes. 
This is summarized in Table~\ref{T:dictionary}, and defined more precisely below. 
The gauge group will have $2k$ generators per $k$-qubit gate and one
for each measurement or initialization. 
Let $U$ be a single qubit gate that acts on qubit $i$ as it
transitions from time $t$ to time $t+1$.  Corresponding to this gate,
we add the gauge generators
\begin{subequations}\label{eq:gauge-gen}
\be
\eta^i_{t+1}(U^\vd X^\vd U^\dag) \eta^i_t(X^\vd) \qand
\eta^i_{t+1}(U^\vd Z^\vd U^\dag) \eta^i_t(Z^\vd).\ee
Similarly for a two-qubit gate $U$ acting on qubits $i,j$ at time
$t+1/2$, we add the generators
\be\{ \eta^{i,j}_{t+1}(UP U^\dag)\eta^{i,j}_t(P) : P \in\{X \ot I, Z\ot I, I\ot X,
I\ot Z\}_{i,j}\}\,. \ee
More generally, a $k$-qubit gate $U$ acting on qubits $i_1,\ldots,i_k$
at time $t+1/2$ has generators
\be\{ \eta^{i_1,\ldots,i_k}_{t+1}(UP U^\dag)\eta^{i_1,\ldots,i_k}_t(P)
: P = I^{\ot j-1} \ot Q \ot I^{\ot k-j}, j \in [k], Q \in \{X,Z\} \}\,. \ee
\end{subequations}
For measurements or
initializations of qubit $i$ we add generators $\eta^i_{T^\out_i}(Z)$ or $\eta^i_{T^\init_i}(Z)$
respectively. 

An illustration of the mapping from the circuit to the code is given in Fig.~\ref{F:circuit2code}.

\begin{figure}[t!b]
\begin{center}
\includegraphics[width=.7\textwidth]{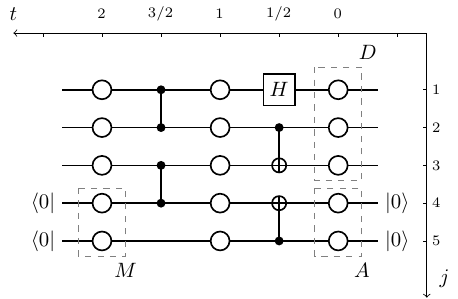}
\caption{Illustration of the circuit-to-code mapping. Using integral spacetime coordinates 
$(j,t)$, the open circles at integer time steps $(j,t)$ are physical qubits of the subsystem code, 
while gates of the circuit are ``syncopated'' and live at half-integer time steps $(j,t)$. 
The three sets of qubits in the 
dashed boxes labelled $D$, $A$, and $M$ correspond to the input qubits for the base 
code, the ancillas, and the measurements (postselections)
respectively.   For this circuit, for example, we have among 
others the gauge generators  $\eta_1^1(X) \eta_1^0(Z)$ and
$\eta_1^1(Z) \eta_1^0(X)$
because these are the circuit identities for the Hadamard gate at spacetime location $(1,1/2)$. Note 
that we pad each line with identity gates to ensure that there are always an even number of 
gates on each line, which is important to maintain our code properties.
We draw our circuit diagram with time moving from right to left to
match the way that operators are composed; e.g. if we apply $U_1$ then
$U_2$ then $U_3$ the resulting composite operator is $U_3 U_2 U_1$.
(see Section~\ref{sec:properties} for details).
\label{F:circuit2code}}
\end{center}
\end{figure}


\section{Code Properties}\label{sec:properties}

In this section we prove that our codes match---in the sense of
Theorem~\ref{thm:main}---the performance of the base codes with
respect to $k$ and $d$.  There are four intermediate steps.  In
\secref{circuit-code} we explain the way in which our localized codes
inherit the properties of the circuits used to build them.  Then in
\secref{iso-groups} we will show that they inherit the properties of
the original code used.  Specifically we will show that the logical
and stabilizer groups for the new code are in a sense isomorphic to
the corresponding groups of the base code.  Next, \secref{distance}
will show that the new code inherits the distance properties of the
original code {\em if} the error-detecting circuit satisfies a relaxed fault-tolerance condition.
\secref{ft-gadgets} describes a
general construction of fault-tolerant measurement gadgets (loosely based on
the DiVincenzo-Shor syndrome measurement scheme~\cite{Shor96,DS96})
satisfying the conditions we need.
\subsection{The circuit-code isomorphism} \label{sec:circuit-code}

We describe first a change of basis for \eq{gauge-gen} that will be
useful later. Let $U_{t+1/2}$ 
denote the global unitary that is applied between time step $t$ and
time step $t+1$, i.e.\ the product of all gates occurring at time $t+1/2$.
Then we can rewrite \eq{gauge-gen} as
\be \{ \eta_{t+1}(U_{t+1/2} P U_{t+1/2}^\dag)\eta_t(P) : P \in
\cP^{n_0 + n_a}\}.\label{eq:gauge-gen-1}\ee
This is a massively overcomplete set of generators, some of which have 
high weight. But we will use it only in the analysis of the code, and
will not measure the generators in \eq{gauge-gen-1} directly.

However, this form is useful to show that every operator that acts on the
physical qubits is gauge-equivalent to one that acts just on the qubits at 
the first time-step of the circuit. Because the action of the gauge group 
cleans up the circuit by sliding all errors in one direction, we call this the squeegee lemma.
\begin{lemma}[The Squeegee Lemma]\label{lem:squeegee}
There is a map $\varphi: \cP^{\ot (T+1)(n_0+n_a)} \to \cP^{\ot (T+1)(n_0+n_a)}$ such that $\varphi(P)$ is in the image of $\eta_0$ and also $P\,\varphi(P)\in\cG$ for every Pauli operator $P$.

Furthermore $P\,\varphi(P)$ is a product of gauge generators corresponding to unitary gates.
\end{lemma}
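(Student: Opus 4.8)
The plan is to define $\varphi$ explicitly by a backward ``sweep'' that pushes the support of $P$ one time slice at a time, using the overcomplete generating set \eq{gauge-gen-1}. Work in the padded picture, so all $(T+1)(n_0+n_a)$ qubits are present at every time step $0,\ldots,T$. Set $P^{(T)}:=P$, and for $t=T,T-1,\ldots,1$ let $R_t\in\cP^{n_0+n_a}$ be the restriction of $P^{(t)}$ to the qubits in time slice $t$, define
\be
g_t \;:=\; \eta_t(R_t)\,\eta_{t-1}\!\big(U_{t-1/2}^{\dag}\,R_t\,U_{t-1/2}\big),
\ee
and put $P^{(t-1)}:=P^{(t)}\,g_t$. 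The procedure is deterministic, so this gives a well-defined map. The operator $g_t$ is exactly a generator of the form \eq{gauge-gen-1} for the slice $t-1\to t$ transition, obtained by taking $U_{t-1/2}^{\dag}R_tU_{t-1/2}$ (a Pauli, since $U_{t-1/2}$ is Clifford) as its free Pauli; it involves only the gates at time $t-1/2$, not any measurement or initialization. Finally set $\varphi(P):=P^{(0)}$, up to an overall sign fixed below.

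I would then check by downward induction on $t$ that $P^{(t)}$ acts nontrivially only on slices $0,\ldots,t$. This holds for $t=T$ by definition. If it holds for $t$, then multiplying by $g_t$ leaves slices $\ge t+1$ untouched (as $g_t$ is supported on slices $t-1$ and $t$), it replaces the slice-$t$ factor $R_t$ by $R_t^2=\pm I$, and it modifies only slice $t-1$; hence $P^{(t-1)}$ is supported on slices $0,\ldots,t-1$. Letting $t\to0$, $\varphi(P)=P^{(0)}$ is supported on slice $0$ alone, i.e.\ it lies in the image of $\eta_0$ — the first assertion.

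Next, unwinding the recursion gives $\varphi(P)=P\,g_Tg_{T-1}\cdots g_1$, hence $P\,\varphi(P)=P^2\,g_Tg_{T-1}\cdots g_1=(\pm I)\,g_Tg_{T-1}\cdots g_1$. Since $\eta_0$ takes values in the full Pauli group, I may choose the sign of $\varphi(P)$ so that this scalar is $+1$, giving $P\,\varphi(P)=g_Tg_{T-1}\cdots g_1$. Each $g_t$ lies in the subgroup generated by the unitary-gate gauge generators of \eq{gauge-gen} (the set \eq{gauge-gen-1} is just an overcomplete repackaging of those, via the fact that Clifford conjugation is a Pauli-group automorphism modulo phase), so $P\,\varphi(P)$ is a product of unitary-gate gauge generators, and in particular $P\,\varphi(P)\in\cG$. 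That settles the remaining two assertions.

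The step I expect to require the most care is the phase bookkeeping: the elementary generators of \eq{gauge-gen} involve conjugated Paulis $UPU^{\dag}$, and Paulis square to $\pm I$ rather than $I$, so moving between those generators, their repackaging \eq{gauge-gen-1}, and the $g_t$ repeatedly introduces $\pm 1$ factors. These are harmless — $-I$ is itself a product of two anticommuting unitary-gate generators, and I have deliberately kept the freedom to fix $\varphi(P)$ only up to sign — but they must be tracked faithfully so that the final product is genuinely a product of generators of the claimed type. Everything else is routine induction, and since the argument is carried out in the padded picture it has no effect on the resource count for $n$.
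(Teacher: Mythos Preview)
Your proof is correct and essentially identical to the paper's: both define the sweep step by multiplying by $\eta_t(Q|_t)\,\eta_{t-1}(U_{t-1/2}^\dag Q|_t U_{t-1/2})$ and then compose from $t=T$ down to $t=1$. Your treatment of the $\pm 1$ phases is in fact more careful than the paper's own proof, which simply asserts that $\eta_t(Q|_t)$ ``cancels'' the slice-$t$ component without tracking the sign; since $-I\in\cG$ (and, as you note, can be written as a product of anticommuting unitary-gate generators), this extra care is not strictly necessary but does no harm.
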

\begin{proof}
Say that an operator $Q$ is localized to times $\leq t$ if $Q$ is in the image of $\eta_{0,\ldots,t}$.
Then $P$ is localized to times $\leq T$.
Let $Q|_t$ denote the restriction of an operator $Q$ on the circuit to time slice $t$.
Then define
\be \label{eq:squeegee-induct}
\varphi_t(Q) := Q\, \eta_t(Q|_t)\,\eta_{t-1}(U_{t-1/2}^\dag Q|_{t}U_{t-1/2}).
\ee 
Since $\eta_t(Q|_t)$ cancels the slice-$t$ component of $Q$, if $Q$ was localized to times $\leq t$ then $\varphi(Q)$ is localized to times $\leq t-1$.
And by \eq{gauge-gen-1}, we see that
$Q\,\varphi(Q) \in\cG$ and is in fact a product of gauge generators corresponding to unitary gates in the circuit.
So take $\varphi = \varphi_1 \circ \dots \circ  \varphi_T$.
Then $\varphi(P)$ is localized to time slice $0$ and $P\,\varphi(P)$ is a product of unitary gate gauge generators.
\end{proof}

Next, the set $\cP^{n_0+n_a}$ is invariant under conjugation
by Cliffords. Thus we can replace \eq{gauge-gen-1} with
\be 
  \{ \eta_{t+1}(U_{\leq t+1/2} P U_{\leq t+1/2}^\dag)
\eta_t(U_{\leq t-1/2} P U_{\leq t-1/2}^\dag) : P \in
\cP^{n_0 + n_a}\} \,,
\label{eq:gauge-gen-2}
\ee
where $U_{\leq t+1/2} := U_{t+1/2} U_{t-1/2} \cdots U_{1/2}$.
We could equivalently take $P$ to range over only the individual $X$
and $Z$ operators.  This time, though, the action of $U_{\leq t+1/2}$
would make the resulting operators still high weight.    Thus the
generators in \eq{gauge-gen-2} should also not be measured directly, but will be useful for 
proving certain operators are contained in $\cG$.

Overall the gauge group can be interpreted as representing the action of the error-detection 
circuit $V_{\ED}$.  Each generator of the gauge group in \eq{gauge-gen} is meant to 
propagate a single Pauli through a single gate of the circuit. The generators in 
\eq{gauge-gen-1} and \eq{gauge-gen-2}  correspond to propagating a column of Paulis 
through one time step of the circuit. More generally, we could consider the effects of the 
circuit on an arbitrary pattern of Paulis occurring throughout the circuit.

\begin{definition}\label{def:pattern}
Define a Pauli pattern (also sometimes called an error pattern) $E$ to be a collection of Pauli 
matrices, one for each circuit location. Let $E_{i,t}$ denote the Pauli occurring at time $t$ 
(assumed to be integer) and qubit $i$. We can interpret $E$ either as a sequence of Paulis 
that occur at different times, or as the unitary $E = \prod_{i,t} \eta_t^i (E_{i,t})$. Let 
$E_t := \bigotimes_i E_{i,t}$. For a unitary circuit $U = U_{T-1/2} \cdots U_{1/2}$ define 
$U_E := E_T U_{T-1/2} E_{T-1} \cdots U_{1/2} E_0$.
\end{definition}

Armed with this definition, we can see one form of the isomorphism between a Clifford 
circuit $V$ and the corresponding gauge group $\cG$ with the following lemma. 
\begin{lemma}\label{lem:circuit-gauge-equiv}
Let $\cG$ be the gauge group corresponding to a Clifford circuit $V$. 
For any error pattern $E$ and any gauge pattern $G$ (i.e.\ the Pauli pattern corresponding 
to an element of $\cG$) we have
\be 
  V_E  = \pm V_{GE} \,.
\label{eq:circuit-gauge-equiv}
\ee
\end{lemma}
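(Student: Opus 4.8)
The plan is to reduce to the case where $G$ is a single gauge generator and then verify the identity generator by generator, exploiting the fact that, by Definition~\ref{def:pattern} (extended in the obvious way to circuits with staggered postselections), $V_E$ is an ordered product in which the pattern $E$ contributes exactly one factor $E_t$ between consecutive time slices, so that altering $E$ on one or two slices changes only the corresponding factors of $V_E$. Since $\cG$ is generated by the operators in \eqref{eq:gauge-gen}, every gauge pattern $G$ is, up to a global sign, a product $g_1 g_2 \cdots g_m$ of such generators, and a global sign on the pattern is harmless because $V_{\pm E} = \pm V_E$. If the claim holds for each single generator $g$ — i.e. $V_{E'} = \pm V_{gE'}$ for every pattern $E'$ — then iterating gives $V_E = \pm V_{g_1 E} = \pm V_{g_2 g_1 E} = \cdots = \pm V_{GE}$, using associativity of Pauli multiplication at each spacetime location. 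So it suffices to treat the two families of generators.

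For a generator attached to a unitary gate it is cleanest to use the repackaged form \eqref{eq:gauge-gen-1}: such a generator, sitting at the time step $t+1/2$, is the pattern that equals $P$ on slice $t$, equals $U_{t+1/2} P U_{t+1/2}^\dag$ on slice $t+1$, and is trivial elsewhere, for some $P\in\cP^{n_0+n_a}$ (an individual generator of \eqref{eq:gauge-gen} is of this form with $P$ supported on the gate's qubits). Multiplying $E$ by this pattern replaces $E_t$ by $PE_t$ and $E_{t+1}$ by $(U_{t+1/2}PU_{t+1/2}^\dag)E_{t+1}$, leaving all other slices unchanged, so $V_{GE}$ and $V_E$ agree outside the window around slices $t$ and $t+1$. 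Inside that window the relevant string of factors in $V_{GE}$ is $\bigl(U_{t+1/2}PU_{t+1/2}^\dag E_{t+1}\bigr)\, U_{t+1/2}\,\bigl(P E_t\bigr)$; commuting the Pauli $U_{t+1/2}PU_{t+1/2}^\dag$ past the Pauli $E_{t+1}$ at the cost of a sign, then using $U_{t+1/2}^\dag U_{t+1/2}=I$ and $P^2 = I$, this collapses to $\pm\,E_{t+1}\, U_{t+1/2}\, E_t$, which is exactly the corresponding string of $V_E$. Hence $V_{GE} = \pm V_E$.

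For a generator coming from a measurement (resp.\ initialization) of qubit $i$, the pattern is $Z$ at the single location $(i,T^{\out}_i)$ (resp.\ $(i,T^{\init}_i)$). Multiplying $E$ by it replaces the single Pauli $E_{i,T^{\out}_i}$ by $Z E_{i,T^{\out}_i}$, and in $V_{GE}$ this factor sits adjacent to the postselection $\bra{0}_i$ (resp.\ the preparation $\ket{0}_i$) on wire $i$; using $\bra{0}Z = \bra{0}$ (resp.\ $Z\ket{0} = \ket{0}$), possibly after one commutation of $Z_i$ past the adjacent $E_{i,\bullet}$, the extra $Z$ is absorbed up to a sign, so again $V_{GE} = \pm V_E$. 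Combining the two cases with the reduction above yields \eqref{eq:circuit-gauge-equiv}.

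The only genuinely delicate point — and the thing to set up carefully before the two one-line computations go through — is the bookkeeping: making precise how $E$ is interleaved with the gates and postselections of $V$ so that each time slice is touched exactly once, and tracking the $\pm$ signs from commuting Paulis past one another and past the $\bra{0}$'s and $\ket{0}$'s. None of this is deep, but it is what the proof actually consists of.
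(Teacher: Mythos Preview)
Your proof is correct and follows essentially the same approach as the paper: reduce by induction to a single gauge generator, then for a unitary-gate generator cancel the pair $(U_{t+1/2}PU_{t+1/2}^\dag,\,P)$ around $U_{t+1/2}$ at the cost of a sign from commuting past $E_{t+1}$, and for an initialization/measurement generator absorb the lone $Z$ into the adjacent $\ket{0}$ or $\bra{0}$ (with one commutation past $E_0$ in the initialization case). The paper's writeup differs only cosmetically, working directly with the generators of \eqref{eq:gauge-gen} rather than the repackaged form \eqref{eq:gauge-gen-1}.
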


\begin{proof}
By induction it suffices to consider the case when $G$ is a single
gauge generator of the form \eq{gauge-gen}.
Write $V = A_f U_{T-1/2} \cdots U_{1/2} A_i$ where $A_f = I^{\ot n_0}
\ot \bra{0}^{n_a}$ and $A_i = A_f^\dag$.  
Thus we have
$$V_{GE} = A_f (G_T E_T) U_{T-1/2} (G_{T-1}E_{T-1}) \cdots U_{1/2} (G_0 E_0)
A_i.$$
Suppose that $G$ corresponds to a single unitary gate at time $t+1/2$
so that the only nonidentity $G_\tau$ are for $\tau = t,
t+1$.  Then the expansions of $V_E$ and $V_{GE}$ differ only
surrounding the gate $U_{t+1/2}$ where they look like
\be 
  E_{t+1} U_{t+1/2} E_t \qand 
 G_{t+1} E_{t+1} U_{t+1/2} G_{t}E_{t}\,,
\ee
respectively.  But $G_{t+1}$ and $G_t$ are not independent since they are in 
the gauge group. Namely, 
$G_{t+1} = U_{t+1/2} G_t U_{t+1/2}^\dag$, 
as can be seen from the defining equations \eq{gauge-gen}. 
Then depending on whether $E_{t+1}$ and $G_{t+1}$ 
commute ($+$) or anticommute ($-$), we have
\be
  G_{t+1} E_{t+1} U_{t+1/2} G_{t}E_{t} = 
  \pm E_{t+1} G_{t+1} U_{t+1/2} G_{t}E_{t} = 
  \pm E_{t+1} U_{t+1/2} E_{t} \,.
\ee
Thus $V_E$ and $V_{GE}$ differ by at most a sign.

If $G$ is the generator corresponding
to an initialization or measurement, then the situation is similar. 
For a measurement, $G= \eta_T^i(Z)$ and $A_f G = A_f$. For an
initialization, $G= \eta_0^i(Z)$ and we need to commute $G$ past $E_0$
before it reaches the $A_i$ and is annihilated.  That is $G_0 E_0 A_i
= \pm E_0 G_0 A_i = \pm E_0 A_i$.

Since this covers a complete generating set, we conclude that
\eq{circuit-gauge-equiv} holds for all $G\in \cG$.
\end{proof}

\subsection{Characterizing the logical and stabilizer groups}\label{sec:iso-groups}

\subsubsection{Spackling}

We can think of our codes as being based on the following primitive:
for each section of wire between gates in a given quantum circuit, we
simply replace that wire with a physical qubit. More precisely, we
interpret each circuit element as consisting of a gate together with
input and output wires, and allow adding explicit identity gates. We
note that initializations only have output wires, while postselections
only have input wires. For adjacent circuit elements, an output wire
of the preceding element is identical to the input wire of the
following element, to avoid double-counting.

The isomorphism between the base code and the new code is based 
on the operation of taking some Pauli operator on the circuit input, 
propagating it through to each time step of the circuit, and leaving 
a copy of the operator at each time step in the appropriate 
spacetime locations in the new code. We call this operation
``spackling'' because it resembles the process of spreading a line of
putty over a 2-dimensional region and letting it dry in a site-specific way.

Here is a more concrete definition of spackling. Suppose that
$U_{\ED} = U_{T-1/2} U_{T-3/2} \ldots U_{1/2}$, where $U_t$ is the product of all
gates performed in time step $t$. Equivalently, $U_{\ED} = U_{\le T-1/2}$. 
Let $P$ be a Pauli operator on the input qubits (including ancillas)
of the circuit, i.e.\ $P \in \cP^{n_0 + n_a}$.  Then 
\be 
  \spack(P) := \prod_{t=0}^T \eta_t(U_{\leq t-1/2} P U_{\leq t-1/2}^\dag) \,.
  \label{eq:spack-product}
\ee
Here we adopt the convention that $U_{-1/2}=I$.

A simple example of spackling comes from a plain wire with no gates, or more precisely, 
with all identity gates. Then $\spack(P) = P^{\ot T+1}$ for $P  \in \cP$.  Observe that the
$\spack$ map preserves commutation relations between Pauli operators if $T$ is even.
More generally \eq{spack-product} shows how $\spack$ preserves the commutation 
relations of the Pauli algebra on the input qubits. In this sense it can be viewed as 
an injective homomorphism.  

Another useful property of the spackling map is that its outputs are gauge-equivalent to its 
inputs. Again this relies on $T$ being even, and can be illustrated on a single wire, 
where $\eta_0(P)$ is gauge-equivalent to $P^{\ot T+1}$ for each $P\in\cP$. 
More generally we have:
\begin{lemma}\label{lem:spack-equiv}
  For $P\in\cP^{n_0  + n_a}$, $\spack(P) \eta_0(P)\in \cG$.
\end{lemma}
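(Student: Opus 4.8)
The plan is to exhibit $\spack(P)\,\eta_0(P)$ explicitly as a product of the gauge generators written in the form \eq{gauge-gen-2}. Abbreviate $Q_t := U_{\le t-1/2}\,P\,U_{\le t-1/2}^\dag$ for $0\le t\le T$, so that $Q_0 = P$ (recall $U_{-1/2}=I$) and $\spack(P) = \prod_{t=0}^T \eta_t(Q_t)$. Since $\eta_0(P)^2 = \pm I$, it suffices to show $\prod_{t=1}^T \eta_t(Q_t) \in \cG$ up to a sign, which we can absorb using $-I\in\cG$. By \eq{gauge-gen-2}, for each $0\le t\le T-1$ the operator $g_t := \eta_{t+1}(Q_{t+1})\,\eta_t(Q_t)$ lies in $\cG$; these are exactly the one--time--step generators, and the factors appearing in them are precisely the time slices of $\spack(P)$.

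First I would form telescoping partial products: for each $1\le t\le T-1$ set $r_t := g_t\,g_{t+1}\cdots g_{T-1}\in\cG$. Operators supported on distinct time slices commute, and each $\eta_\tau(Q_\tau)$ with $t<\tau<T$ appears in exactly the two consecutive factors $g_{\tau-1}$ and $g_\tau$, so it contributes $\eta_\tau(Q_\tau)^2 = \pm I$. Hence $r_t = \pm\,\eta_t(Q_t)\,\eta_T(Q_T)$, with $r_t\in\cG$ as a product of generators.

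Next I would multiply these together: $\prod_{t=1}^{T-1} r_t\in\cG$, and (reordering freely up to sign) this equals $\pm\bigl(\prod_{t=1}^{T-1}\eta_t(Q_t)\bigr)\,\eta_T(Q_T)^{\,T-1}$. Here I use that the circuit depth $T$ is even (guaranteed by the identity-gate padding): $T-1$ is odd, so $\eta_T(Q_T)^{\,T-1} = \pm\,\eta_T(Q_T)$, giving $\prod_{t=1}^{T-1} r_t = \pm\prod_{t=1}^{T}\eta_t(Q_t) = \pm\,\spack(P)\,\eta_0(P)$. Since the left-hand side is in $\cG$ and $-I\in\cG$, we conclude $\spack(P)\,\eta_0(P)\in\cG$. (An equivalent route is to apply the Squeegee Lemma to $\spack(P)$: one checks $U_{t-1/2}^\dag Q_t\, U_{t-1/2} = Q_{t-1}$, so each step of $\varphi$ annihilates slices of $\spack(P)$ in pairs, and, again using $T$ even, $\varphi(\spack(P)) = \pm\eta_0(P)$; then $\spack(P)\,\varphi(\spack(P))\in\cG$ by \lemref{squeegee}.)

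The only real subtlety in either approach — and the place to be careful — is the parity argument: a single telescoping collapses the whole product to just its two endpoints $\eta_0(Q_0)\eta_T(Q_T)$, so one genuinely needs the second layer of products together with $T$ being even in order to leave a copy of the operator on every time slice. The remaining sign bookkeeping is routine and is handled throughout by $-I\in\cG$.
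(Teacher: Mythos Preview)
Your argument is correct, but it is considerably more roundabout than the paper's. After cancelling the $t=0$ slice, the paper simply groups the remaining product $\prod_{t=1}^{T}\eta_t(Q_t)$ into \emph{disjoint} consecutive pairs:
\[
\bigl(\eta_2(Q_2)\eta_1(Q_1)\bigr)\bigl(\eta_4(Q_4)\eta_3(Q_3)\bigr)\cdots\bigl(\eta_T(Q_T)\eta_{T-1}(Q_{T-1})\bigr),
\]
which in your notation is just $g_1 g_3 g_5\cdots g_{T-1}$; each factor is already a generator from \eq{gauge-gen-2}, and that is the entire proof. The evenness of $T$ enters only to guarantee that these pairs exhaust $\{1,\ldots,T\}$ exactly. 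Your two-stage construction --- first telescoping each $r_t$ down to its endpoints, then multiplying the $r_t$ and counting powers of $\eta_T(Q_T)$ --- gets to the same place, but the detour is unnecessary. In particular, your closing remark that ``one genuinely needs the second layer of products'' reflects a slight misconception: no telescoping (and hence no second layer) is needed at all if one picks only the odd-indexed $g_j$ from the start.
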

\begin{proof}
By first \eq{spack-product} and then the fact that $T$ is even, we have
\ba \spack(P) \cdot  \eta_0(P)
 & =\prod_{t=1}^T \eta_t(U_{\leq t-1/2} P U_{\leq t-1/2}^\dag)\\
& = \prod_{t'=1}^{T/2}
\eta_{2t'}(U_{\leq 2t'-1/2} P U_{\leq 2t'-1/2}^\dag)
\eta_{2t'-1}(U_{\leq 2t'-3/2} P U_{\leq 2t'-3/2}^\dag).\ea
By \eq{gauge-gen-2}, this is contained in the gauge group.
\end{proof}

\subsubsection{Proving the isomorphism}

Recall that our base code consists of a code $C_0$ with $n_0$ physical
qubits, $k_0$ logical qubits, and distance $d_0$, together with an
error-detecting circuit for fault-tolerantly measuring the stabilizer
generators. From this base code, we follow a sequence of intermediate
codes and build our final code in stages. For $i=1,2,3,4$ we define
code $C_i$ with gauge group $\cG_i$, stabilizer group $\cS_i \cong
Z(\cG_i)/\{\pm I\}$ and logical group $\cL_i = C(\cG_i)/\cS_i$, and
standard code parameters $n_i$, $k_i$, and $d_i$. We then build our
code from the circuit for $C_0$ in the following stages: 
\bit
  \item $C_1$ consists of just wires, i.e., a circuit with only identity gates. 
  \item $C_2$ adds the unitary gates from the circuit.
  \item $C_3$ adds the initializations.
  \item $C := C_4$ adds the postselections.
\eit

Let's analyze the code properties of each of these. There are $n_1 =
n_0 + n_a$ lines (wires), each consisting of an odd numbers of qubits,
corresponding to circuits in which each line has an even number of
circuit elements. In $C_1$, suppose the wires have lengths $l_1,
\ldots, l_{n_0 + n_a}$.  The gauge group $\cG_1$ is generated by
$\{XX,ZZ\}$ on nearest-neighbor qubits along wires. The logical group
is $\cL_1 = \langle \{\eta^i(X^{\ot l_i}), \eta^i(Z^{\ot l_i}) : i\in
[n_0 + n_a]\}\rangle$.  The stabilizer group $\cS_1$ is empty. These
claims rely on the fact that each $l_i$ is odd, since if the $l_i$
were even then the elements of the logical group would commute with
each other.

To get $C_2$, we add the actual unitary gates from the circuit. The
new gauge group $\cG_2$ is related to $\cG_1$ by conjugation. We do
not conjugate each element of $\cG_1$ by the same unitary. Instead,
label the generators $\eta_{t+1}^i(X)\eta_t^i(X),
\eta_{t+1}^i(Z)\eta_t^i(Z)$ of $\cG_1$ by a half-integer time $t$. For each $P
\in \{X,Z\}^{\ot n_1}$ and each half-integer gate time $t$ there is a generator
$\eta_{t+1}(P)\ot \eta_{t}(P)$ of $\cG_1$.  For $\cG_2$ we replace this generator
with $\eta_{t+1}(UPU^\dag)\ot \eta_{t}(P)$, as illustrated in Table \ref{T:dictionary}. 
For example, if we apply a Hadamard gate, then the generators $X\ot X,Z\ot Z$ 
are replaced with $Z\ot X, X\ot Z$.  For a CNOT gate, we replace the generators 
\be 
  \begin{array}{cc}X & X \\ I & I\end{array}\qquad
  \begin{array}{cc}Z & Z \\ I & I\end{array}\qquad
  \begin{array}{cc}I & I \\ X & X\end{array}\qquad
  \begin{array}{cc}I & I \\ Z & Z\end{array}
\ee
with the generators
\be 
  \begin{array}{cc}X & X \\ X & I\end{array}\qquad
  \begin{array}{cc}Z & Z \\ I & I\end{array}\qquad
  \begin{array}{cc}I & I \\ X & X\end{array}\qquad
  \begin{array}{cc}Z & I \\ Z & Z\end{array}\,.
\ee
The logical group is similarly twisted and now is 
\be
  \cL_2 = \langle \{\spack(P_i) : P \in \{X,Z\}, i \in [n_0+n_a]\} \rangle\,,
\ee
which follows from the next lemma.
\begin{lemma}\label{lem:spackle-logical}
For any $P$ acting on the input qubits, $\spack(P)$ commutes with all elements of $\cG_2$.
Furthermore, an operator on the circuit commutes with all elements of $\cG_2$ \emph{only if} it is equal to $\spack(P)$ for some $P$.
\end{lemma}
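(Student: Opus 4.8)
The plan is to establish both directions by exploiting the change-of-basis form \eq{gauge-gen-2} of the gauge group, which presents every generator as $\eta_{t+1}(U_{\le t+1/2} P U_{\le t+1/2}^\dag)\eta_t(U_{\le t-1/2} P U_{\le t-1/2}^\dag)$, i.e.\ a column-$(t+1)$ piece paired with a column-$t$ piece, each obtained by conjugating the \emph{same} input Pauli $P$ by the circuit-so-far unitaries. First I would prove the ``if'' direction: fix an input Pauli $Q$ and an arbitrary gauge generator of the form above. Since $\spack(Q) = \prod_{t=0}^{T}\eta_t(U_{\le t-1/2}QU_{\le t-1/2}^\dag)$ is a tensor product over time slices, and the displayed generator is supported only on slices $t$ and $t+1$, the commutator is the product of the slice-$t$ and slice-$(t+1)$ contributions. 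On slice $t$ we are comparing $U_{\le t-1/2}QU_{\le t-1/2}^\dag$ with $U_{\le t-1/2}PU_{\le t-1/2}^\dag$, and on slice $t+1$ we are comparing $U_{\le t+1/2}QU_{\le t+1/2}^\dag$ with $U_{\le t+1/2}PU_{\le t+1/2}^\dag$. Because conjugation by a Clifford preserves the commutator sign, the slice-$t$ pair commutes iff $Q$ and $P$ commute, and likewise the slice-$(t+1)$ pair; the two signs are therefore equal and multiply to $+1$. Hence $\spack(Q)$ commutes with every generator of $\cG_2$ and so with all of $\cG_2$.

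Next I would prove the ``only if'' direction. Suppose a Pauli operator $R$ on the full spacetime array commutes with all of $\cG_2$. Apply \lemref{squeegee} to get $\varphi(R)$ supported only on slice $0$, with $R\,\varphi(R)\in\cG$; since commuting with all of $\cG$ is preserved under multiplying by elements of $\cG$, the slice-$0$ operator $\varphi(R)$ also commutes with all of $\cG_2$. Write $\varphi(R) = \eta_0(P)$ for some $P\in\cP^{n_0+n_a}$. Now I claim $\varphi(R) = \spack(P)\cdot(\text{something already known to be }\spack)$ — more directly, by \lemref{spack-equiv} we have $\spack(P)\eta_0(P)\in\cG$, and in fact this element lies in $\cG_2$ since spackling uses only the unitary-gate generators (it is built from \eq{gauge-gen-2}); so $R$ is gauge-equivalent (within $\cG_2$) to $\spack(P)$. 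It remains to check that distinct slice-$0$ operators $\eta_0(P)$ that commute with $\cG_2$ force no further constraint, i.e.\ that the only way $\eta_0(P)$ can fail to extend is ruled out; but here $\eta_0(P)$ always commutes with $\cG_2$ automatically only if we verify it, so instead I argue: $\eta_0(P)$ commutes with the generator $\eta_{1}(U_{1/2}PU_{1/2}^\dag)\eta_0(P')$ for each $P'$, and the slice-$1$ factor acts trivially on $\eta_0(P)$, so the condition reduces to $[\eta_0(P),\eta_0(P')]=0$ for all $P'\in\cP^{n_0+n_a}$, which holds only if $P\in\{\pm I\}$ — that is \emph{not} what we want, so the correct route is: $R$ itself (not $\varphi(R)$) must be shown to equal $\spack(P)$. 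I would therefore instead argue directly that $R\,\spack(P)^{-1}$ is a slice-$0$ operator (after using \lemref{spack-equiv} to move $\spack(P)$ to $\eta_0(P)$ up to $\cG_2$) which commutes with all $\eta_{1}(\cdot)\eta_0(\cdot)$ generators, and then iterate the squeegee-style cancellation slice by slice to conclude $R\,\spack(P)^{-1}\in\langle\text{nothing}\rangle$, i.e.\ $R=\spack(P)$ up to sign, absorbing the sign into $P$.

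The main obstacle is the ``only if'' direction: getting from ``commutes with $\cG_2$'' plus ``gauge-equivalent to a slice-$0$ operator'' to the exact equality $R=\spack(P)$, rather than merely $R\equiv\spack(P)\pmod{\cG_2}$. The clean way to handle this is to note that the slice-by-slice structure of \eq{gauge-gen-2} lets one read off $R$'s slice-$t$ component from its slice-$(t-1)$ component: commuting with the full family $\{\eta_t(U_{\le t-1/2}PU_{\le t-1/2}^\dag)\eta_{t-1}(U_{\le t-3/2}PU_{\le t-3/2}^\dag):P\}$ forces, for each $t$, that the pair (slice $t$, slice $t-1$) of $R$ have the same commutation character as $(U_{\le t-1/2}PU_{\le t-1/2}^\dag, U_{\le t-3/2}PU_{\le t-3/2}^\dag)$ for every $P$, which pins down slice $t$ of $R$ as $U_{\le t-1/2}(\text{slice-}0\text{ data})U_{\le t-1/2}^\dag$ up to sign. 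Fixing the slice-$0$ data as $P$ and propagating then gives exactly $\spack(P)$; the accumulated signs can be absorbed into the choice of $P$ up to an overall phase, which is immaterial for the logical group. I would carry out this propagation argument carefully, as it is the only genuinely non-routine part; the ``if'' direction and the appeals to Lemmas~\ref{lem:squeegee} and \ref{lem:spack-equiv} are bookkeeping.
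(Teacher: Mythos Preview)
Your ``if'' direction is correct and matches the paper's argument: both reduce to the observation that conjugation by $U_{t+1/2}$ preserves commutation relations, so the slice-$t$ and slice-$(t+1)$ signs agree and cancel.

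For the ``only if'' direction, your final approach---the slice-by-slice propagation showing that commuting with all generators at time $t-\tfrac12$ forces $R|_t = \pm U_{t-1/2}\, R|_{t-1}\, U_{t-1/2}^\dag$---is correct and is the inductive form of the paper's contrapositive argument. The paper argues directly: if $Q$ is not a spackle, pick $t$ with $Q|_{t+1} \neq U_{t+1/2}Q|_t U_{t+1/2}^\dag$, choose a Pauli $A$ anticommuting with the nonidentity operator $Q|_t\cdot U_{t+1/2}^\dag Q|_{t+1} U_{t+1/2}$, and then $\eta_{t+1}(U_{t+1/2}AU_{t+1/2}^\dag)\eta_t(A)\in\cG_2$ anticommutes with $Q$. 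This is the same content as your propagation argument, just packaged more cleanly and without any detour.

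However, your first attempt at ``only if'' contains a genuine error that you should not leave in the write-up, even as motivation. You assert that ``commuting with all of $\cG$ is preserved under multiplying by elements of $\cG$.'' This is false: if $R$ centralizes $\cG_2$ and $g\in\cG_2$, then $Rg$ centralizes $\cG_2$ if and only if $g\in Z(\cG_2)=\cS_2=\{\pm I\}$. So $\varphi(R)=\eta_0(P)$ need not centralize $\cG_2$ at all---and indeed, as you then discover by a direct check, it does so only when $P=\pm I$. You eventually abandon this line for a different reason (equality versus gauge-equivalence), but the step was already broken. The squeegee/spack-equiv detour is simply unnecessary here; drop it and go straight to the slice-by-slice argument, or equivalently the paper's contrapositive.
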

\begin{proof}
Let $g \in \cG_2$ be a gauge generator, corresponding to a gate at some time $t+1/2$.
Let $Q|_t$ denote the restriction of some Pauli operator $Q$ to the $t^{\text{th}}$ time step.
By definition, $\spack(P)|_t = \left(\prod_{\tau=0}^t U_{\tau-1/2}\right)^{\vd}P\left(\prod_{\tau=0}^t U_{\tau-1/2}\right)^{\dag}$ and similarly for $\spack(P)|_{t+1}$.
It follows that $\spack(P)|_{t+1} = U_{t+1/2}^{\vd}\spack(P)|_tU_{t+1/2}^{\dag}$.

By definition, also $g = \eta_{t+1}(U_{t+1/2}^{\vd}QU_{t+1/2}^{\dag})\eta_t(Q)$ for some Pauli $Q$.
Since $g$ is only supported on time steps $t$ and $t+1$, it suffices to verify that $g$ commutes with $\spack(P)|_t\spack(P)|_{t+1}$.
This is verified by noting that conjugation by $U_{t+1/2}$ is a group isomorphism and hence preserves commutation/anticommutation relations.

To argue the other direction, suppose $Q$ is some operator which is not a spackle.
We note that the condition $\spack(P)|_{t+1} = U_{t+1/2}^{\vd}\spack(P)|_tU_{t+1/2}^{\dag}$ for every $P$ and $t$ is actually equivalent to the definition of $\spack$.
Thus there is some $t$ such that $Q|_{t+1} \ne U_{t+1/2}^{\vd}Q|_tU_{t+1/2}^{\dag}$.
Since $Q|_{t}( U_{t+1/2}^{\dag}Q|_{t+1}U_{t+1/2}^{\vd})$, is localized to the $t$-th time step and is not equal to the identity, there is some Pauli operator $A$ so that $\eta_t(A)$ anticommutes with it.
It follows that $\eta_t(A)$ anticommutes with exactly one of $Q|_{t}$ or $U_{t+1/2}^{\dag}Q|_{t+1}U_{t+1/2}^{\vd}$.
Therefore $Q$ anticommutes with $g = \eta_{t+1}(U_{t+1/2}^{\vd}A U_{t+1/2}^{\dag})\eta_t(A)$.
\end{proof}

The stabilizer group $\cS_2$ is still empty.  The following lemma is
an immediate consequence of the squeegee lemma (\lemref{squeegee}) 
and \lemref{spack-equiv}.
\begin{lemma}\label{lem:full-spackle}
Let $P$ be a Pauli operator acting on the input qubits. Then $\eta_0(P)$, $\spack(P)$, and 
$\eta_T(U_{\ED} P U_{\ED}^\dag)$ are all equivalent modulo $\cG_2$. 
That is, the product of any pair of them is contained in $\cG_2$.
\end{lemma}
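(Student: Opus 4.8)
The plan is to chain together the two results the lemma explicitly cites. First I would recall what each gives us. \lemref{spack-equiv} states directly that $\spack(P)\,\eta_0(P)\in\cG_2$, i.e.\ $\eta_0(P)$ and $\spack(P)$ are equivalent modulo $\cG_2$; this handles one of the three pairs immediately. So the only real work is to show $\spack(P)$ is equivalent to $\eta_T(U_{\ED} P U_{\ED}^\dag)$ modulo $\cG_2$, and then transitivity closes the triangle. (One should double-check that ``equivalent modulo $\cG_2$'' really is an equivalence relation here — it is, since $\cG_2$ is a group and products of equivalent operators stay equivalent because $AB\cdot A'B' = \pm AA'\cdot BB'$ up to a sign absorbed by $-I\in\cG_2$, or more carefully because the two operators in question commute up to sign.)

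For the remaining pair, I would use \eq{spack-product}: $\spack(P) = \prod_{t=0}^T \eta_t(U_{\le t-1/2}P U_{\le t-1/2}^\dag)$, so $\eta_T(U_{\le T-1/2}PU_{\le T-1/2}^\dag) = \eta_T(U_{\ED}PU_{\ED}^\dag)$ is exactly the $t=T$ factor of $\spack(P)$. Hence $\spack(P)\,\eta_T(U_{\ED}PU_{\ED}^\dag)$ equals (up to sign) the product of the $t=0,\dots,T-1$ factors of $\spack(P)$, which is an operator localized to times $\le T-1$. Now apply the squeegee lemma (\lemref{squeegee}): any Pauli operator $R$ localized to times $\le T$ — in particular this $R$ — has $R\,\varphi(R)\in\cG$ with $\varphi(R)$ in the image of $\eta_0$, and moreover $R\,\varphi(R)$ is a product of gauge generators \emph{corresponding to unitary gates}, hence lies in $\cG_2$ (not just in the larger $\cG$ of the full code). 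So it suffices to check that $\varphi(R)$, the $\eta_0$-part, is itself in $\cG_2$ — equivalently trivial in $\cL_2$.

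To identify $\varphi(R)$: by inspecting the inductive definition \eq{squeegee-induct}, the squeegee map peels off slices from the top down and deposits them, via the $U^\dag(\cdot)U$ propagation, onto slice $0$; since $R$ here is $\spack(P)$ with its top slice removed, each remaining slice $\eta_t(U_{\le t-1/2}PU_{\le t-1/2}^\dag)$ gets propagated back to $\eta_0(P)$, and these contributions from $t=0,\dots,T-1$ occur an even number of times in pairs except possibly for a parity issue — here is where $T$ even matters. With $T$ even there are $T$ slices $t=0,\dots,T-1$ contributing, and one checks the $\eta_0$ residue is $\eta_0(P^{T})=\eta_0(P)$ if $T$ is... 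I would instead argue more cleanly: $\spack(P)\,\eta_0(P)\in\cG_2$ by \lemref{spack-equiv} and $\spack(P)\,\eta_T(U_{\ED}PU_{\ED}^\dag)$ is localized to times $\le T-1$, so $\eta_0(P)\,\eta_T(U_{\ED}PU_{\ED}^\dag)$ is gauge-equivalent (mod $\cG_2$) to something localized to $\le T-1$; but by \lemref{spackle-logical}, the only operators commuting with all of $\cG_2$ are spackles, and the unique spackle equal to $\eta_0(P)\cdot(\text{stuff})$ forces everything to collapse. Concretely, $\spack(P)\,\eta_0(P)\,\spack(P)\,\eta_T(U_{\ED}PU_{\ED}^\dag) = \eta_0(P)\,\eta_T(U_{\ED}PU_{\ED}^\dag)$ up to sign (as $\spack(P)^2 = \pm I$), and this lies in $\cG_2\cdot\{$operator localized to $\le T-1\}$; since $\eta_0(P)\eta_T(U_{\ED}PU_{\ED}^\dag)$ commutes with $\cG_2$ (both factors are $\eta_\tau$ of a propagated $P$ at the boundary time steps, and \lemref{spackle-logical}'s criterion is met only by a genuine spackle — which at the boundary slices agrees with $\spack(P)$), it must itself be a product of gauge generators, i.e.\ in $\cG_2$.

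The main obstacle is the bookkeeping around which operators land in $\cG_2$ versus the full gauge group $\cG$ of the final code: the squeegee lemma is stated for the full construction, but the lemma we are proving is about $\cG_2$, so I must verify that the ``product of gauge generators corresponding to unitary gates'' produced by squeegee genuinely avoids the initialization/postselection generators — which it does by the final sentence of \lemref{squeegee} — and that the leftover $\eta_0$-component is trivial in $\cL_2$, which is where \lemref{spackle-logical} and the parity ($T$ even) assumption do the work. Everything else is routine transitivity and sign-tracking.
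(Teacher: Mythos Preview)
Your overall strategy---use \lemref{spack-equiv} for one pair, squeegee for another, and transitivity for the third---is exactly what the paper intends. But you make the squeegee step far harder than necessary and then fall back on an argument that is actually wrong.

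The clean route, which the paper has in mind, is to apply \lemref{squeegee} \emph{directly} to $Q=\eta_T(U_{\ED}PU_{\ED}^\dag)$. Tracing the inductive definition \eq{squeegee-induct} on this single-slice operator, each $\varphi_t$ replaces $\eta_t(U_{\le t-1/2}PU_{\le t-1/2}^\dag)$ by $\eta_{t-1}(U_{\le t-3/2}PU_{\le t-3/2}^\dag)$, so $\varphi(Q)=\eta_0(P)$ on the nose. Since the squeegee lemma guarantees $Q\,\varphi(Q)$ is a product of gauge generators corresponding to unitary gates, this lies in $\cG_2$, and you are done. No parity bookkeeping, no invocation of \lemref{spackle-logical}.

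Your detour through $R=\spack(P)\,\eta_T(U_{\ED}PU_{\ED}^\dag)$ can in fact be completed (the slices cancel in pairs under squeegee, giving $\varphi(R)=I$ since $T$ is even), but you abandon it midway. More importantly, your ``cleaner'' fallback contains a genuine error: you assert that $\eta_0(P)\,\eta_T(U_{\ED}PU_{\ED}^\dag)$ commutes with all of $\cG_2$. It does not. By \lemref{spackle-logical}, only spackles commute with $\cG_2$, and this operator is supported only on slices $0$ and $T$---it is missing all the intermediate slices of $\spack(P)$ and hence fails the criterion $Q|_{t+1}=U_{t+1/2}Q|_tU_{t+1/2}^\dag$ at every interior $t$. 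So that line of argument cannot close.
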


Adding initializations brings us to $C_3$. Of course, there is nothing fundamentally different 
from our code's perspective between initializations and postselections, since both consist of 
adding $Z$ operators locally, and because our gates are reversible. The only asymmetry 
comes from our decision to analyze initializations first. 
Each initialization removes one generator from the 
logical group and adds it to the stabilizer group. (The conjugate element of the logical group 
becomes the corresponding error.) The initializations commute with each other.  Now the 
logical and stabilizer groups are
\ba
  \cL_3 &= \langle \spack(P_i) : P \in \{X,Z\}, i \in [n_0] \rangle
  \label{eq:L3}\\
  \cS_3 &= \langle \spack(Z_{n_0+i}) : i \in [n_a] \rangle \,.
  \label{eq:S3}
\ea
Every new element of the gauge group is equivalent (modulo $\cG_2$) 
to one of these stabilizers, so the new gauge group is simply 
$\cG_3 = \langle \cG_2 \cup \cS_3\rangle$. 

So far we have not used any specific properties of the circuit. The 
final code $C = C_4$ adds the postselections, aka measurements. Let 
$M$ be the set of qubits where postselections take place (in our case always equal to the
set of ancilla qubits), 
and let $\cM$ be the group generated by $\{Z_i : i\in M\}$. Then
$\cG_4 = \langle \cG_3 \cup \eta_T(\cM) \rangle$.  (Here we have used the assumption that
measurements all occur at the final time $T$.)  In general adding elements 
to the gauge group will cause the elements to move from the logical 
group to the stabilizer group and from the stabilizer group to the 
(pure) gauge group (i.e.\ $\cG/\cS$).

Before proving the next lemma, we introduce the following notation. For any operator $A$ of appropriate dimension, we define a map $u$ by
\begin{align}
	u(A) := U_{\ED}^\dag A U^{\vd}_{\ED} \,.
\end{align}
Now we have the following lemma.

\begin{lemma}
	\label{lem:good-circuit-decomp}
	Let $r_0$ and $r_a$ denote the restriction
	maps taking $\cP^{\ot (n_0+n_a)}$ to the first $n_0$ or the last $n_a$ qubits respectively.
	If
	\[V_{\ED} = 
	(I^{\ot n_0} \ot \bra{0}^{\ot n_a}) U_{\ED}(I^{\ot n_0} \ot \ket{0}^{\ot n_a}) \]
	is a good error-detecting Clifford circuit for $C_0$ and if $\cM$ is the group generated by the $Z$ operators on ancilla qubits (i.e.~$Z_j$ for $n_0+1 \le j \le n_0+n_a$), then there exists a pair of subgroups $\cH$ and $\cH^{\perp}$ of $\cM$ such that
\benum
\item
    $\cH \cdot \cH^{\perp} = \cM$ and $\cH \cap \cH^\perp = \{I\}$;
\item
	$r_0(u(\cH)) = \cS_0$.
\item
	$r_a(u(\cH)) \subseteq \cM$.
\item
	Every non-identity element of $u(\cH^{\perp})$ anticommutes with some element of $\cM$.
\eenum
\end{lemma}
\begin{proof} Since $\cM$ is
  isomorphic (as a group) to $\bbZ_2^{n_a}$, for every subgroup $\cA$ of $\cM$ there is a
  subgroup $\cA^\perp$ such that $\cA \cdot \cA^\perp = \cM$ and $\cA \cap \cA^\perp =
  \{I\}$.

Define $\cH = \cM\cap u^{-1}(C(\cM))$.  In
  other words, $\cH$ is the subgroup of $\cM$ consisting of $H \in \cM$ such that $r_a(u(H))$
  centralizes $\cM$.  Then every non-identity
  element of $u(\cH^\perp)$ anticommutes with some element of $\cM$.
  Already
  we can see that parts 1, 3 and 4 of the Lemma are true.
In the rest of the proof we will show that $\cH$ corresponds to stabilizers and
$\cH^\perp$ to gauge qubits, on the way to establishing part 2 of the Lemma.
	
  For $A \in \cM$, let $\Phi_{A}$ denote the projector to the $+1$ eigenspace of $A$, and
  for $\cA$ a subgroup of $\cM$, let $\Phi_{\cA}$ denote the projector to the common $+1$
  eigenspace of every element in $\cA$.  Since $\cM$ is abelian we have 
\be 
\Phi_{\cA}
  =\prod_{A\in \cA}\Phi_A = \frac{1}{|\cA|} \sum_{A\in\cA} A.\ee
  More generally if $\cA,\cB$ are subgroups of $\cM$ then
  $\Phi_{\cA\cB} = \Phi_{\cA}\Phi_{\cB}$.

  For any operator $\Theta$, define 
\be \pi(\Theta) = 
(I^{\ot n_0} \ot \bra{0}^{\ot n_a}) \Theta (I^{\ot n_0} \ot
\ket{0}^{\ot n_a}).\ee
For $a,b \in \{0,1\}^{n_0}, a',b'\in \{0,1\}^{n_a}$ we can calculate
\be \pi(X^aZ^b \ot X^{a'}Z^{b'}) = X^aZ^b \mathbbm{1}_{a'=0},\ee
where $\mathbbm{1}_p$ is the indicator function for proposition $p$.
In particular, for a Pauli $P$, $\pi(u(P))=r_0(u(P)) \mathbbm{1}_{P\in\cH}$.
It follows that for a subgroup $\cA\subseteq \cM$ we have
\be \pi(u(\Phi_\cA)) = \frac{1}{|\cA|}\sum_{A\in\cA}  \pi(u(A))
= \frac{|\cA\cap \cH|}{|\cA|} r_0(u(\Phi_{\cA\cap \cH})).
\label{eq:proj-subgp}\ee

This implies that
\begsub{VdagV}
V_{\ED}^\dag V_{\ED} ={}& \pi(u(\Phi_\cM)) = \frac{|\cH|}{|\cM|} r_0(u(\Phi_\cH)) \\
&\pi(u(\Phi_\cH)) =  r_0(u(\Phi_\cH)) \label{eq:phi_ch}
\\ 
&\pi(u(\Phi_{\cH^\perp}))
= \frac{1}{|\cH^\perp|} I^{\ot n_0}
= \frac{|\cH|}{|\cM|} I^{\ot n_0}
\endsub
Since $V_{\ED}^\dag V_{\ED}$ is proportional to the projector onto
$C_0$ and $r_0(u(\Phi_{\cH}))$ is a projector, this implies $r_0(u(\Phi_{\cH}))$ is equal to the projector onto the codespace $C_0$.

We can now establish item 2 of the Lemma.
This follows from $r_0(u(\Phi_{\cH})) = \frac{1}{|\cH|}\sum_{H \in \cH} r_0(u(H))$ being equal to the projector to the codespace
of $C_0$, which is also equal to $\frac{1}{|\cS_0|}\sum_{s \in \cS_0} s$ for $\cS_0$ the stabilizer group of $C_0$.
\end{proof}

We will also need a lemma about group centralizers. 
Recall that $C(A)$ denotes the centralizer of a subgroup $A$ (with respect to the larger group $\cP^{\ot n})$.
\begin{lemma}
	\label{lem:centralizer}
	For any two subgroups $A$ and $B$ of a larger group, 
	\[C(A B) = C(A) \cap C(B)\,.\]
	Also, if the larger group is $\cP^{\ot n} = \langle I, X, iY, Z\rangle^{\ot n}$, and both $A$ and $B$ contain $-I$, then additionally, 
	\[C(A \cap B) = C(A)C(B)\,.\]
\end{lemma}
\begin{proof}
	To show $C(A B) = C(A) \cap C(B)$, note that since $A$ and $B$ are subsets of $AB$, an element centralizes $AB$ only if it centralizes both $A$ and $B$.
	Also, if an element centralizes both $A$ and $B$, then it must also commute through any element of $AB$.
	
	We will establish the intermediate result that $C(C(A)) = A$ if $A$ is itself a centralizer of some subgroup.
	Suppose that $x \in A$.
	Then by the definition of the centralizer, $x$ commutes with every element in $C(A)$ and thus $x \in C(C(A))$.
	Note that this shows $A' \subseteq C(C(A'))$ for all sets $A'$ even if $A'$ is not a centralizer.
	Suppose now that $A=C(A')$ and $x \not\in A$.
	Then there must be some element $y \in A'$ such that $x$ does not commute with $y$.
	But $y \in C(C(A')) = C(A)$, since we've established $A' \subseteq C(C(A'))$.
	Since $x$ does not commute with $y$ and $y \in C(A)$, we must have $x \not\in C(C(A))$.
	So $C(C(A)) \subset A$ if $A$ is a centralizer. 
	
	Now if $A$ and $B$ are both centralizers, then
	\[C(A \cap B) = C(\,C(C(A)) \cap C(C(B))\,) =  C(\,C(C(A)C(B))\,) = C(A)C(B)\,.\]
	
	And every subgroup $A$ of $\cP^{\ot n}$ that contains $-I$ is a centralizer.
	To see this, note that $A$ is a logical subgroup of some stabilizer code, and by making every non-logical operator a gauge operator, we see that $A$ is the centralizer of that gauge group.
\end{proof}

We now characterize the stabilizer group $\cS = \cS_4$ and logical group $\cL= \cL_4$ of the final code $C_4$, after adding the postselections.

\begin{lemma}[Isomorphism of codes] \label{lem:isomorphism}
Let $\cM$, $r_0$, and $r_a$ be defined as in \lemref{good-circuit-decomp}, so that $r_0$ and $r_a$ are the restriction
maps taking $\cP^{\ot (n_0+n_a)}$ to the first $n_0$ or the last $n_a$ qubits respectively,
and $\cM = \langle Z_j : n_0+1 \leq j \leq n_0+n_a\rangle \subseteq \cP^{\ot (n_0 + n_a)}$.
Further let $\cM_a = r_a(\cM)$ denote the restriction of $\cM$ to the ancilla qubits.
Then: \begin{enumerate}
	\item
The stabilizer group $\cS_4 = \spack(\cK_{\cS})$ for some group $\cK_{\cS} \subseteq \cP^{\ot(n_0+n_a)}$ such that $r_0(\cK_{\cS}) = \cS_0$ and $r_a(\cK_{\cS}) \subseteq \cM_a$.
Furthermore, $\cK_{\cS} = \left\langle C(\cM) \cap u(\cM),\; \cM \cap C(u(\cM))\right\rangle$.
	\item
The bare logical group $\cL_4 = \spack(\cK_{\cL})$ for some group $\cK_{\cL} \subseteq \cP^{\ot(n_0+n_a)}$ such that $r_0(\cK_{\cL}) = \cL_0$ and $r_a(\cK_{\cL}) \subseteq \cM_a$.
Furthermore $r_0$ is a bijection between $\cK_{\cL}$ and $\cL_0$.
	\item
As a consequence of the above points, an error pattern $E$ is a logical operator on $C_4$ if and only if $(V_{\ED})_E = V_{\ED}E_{\mathrm{in}}$ for some Pauli $E_{\mathrm{in}}\in\cL_0$.
\end{enumerate}
\end{lemma}
In other words, the stabilizers and logical operators of $C_4$ are spackled versions of the stabilizers and logical operators of original code, plus there are some additional stabilizers, corresponding to spackling of some subgroup of $\cM_a$.

\begin{proof}
We'll start by characterizing $C(\cG_4)$, since this equals $\cS_4\cdot \cL_4$.

By \lemref{spackle-logical}, the centralizer of $\cG_2$ was precisely the set of operators $\spack(\cP^{n_0+n_a})$.
When adding the initialization operators in the circuit, we had $\cG_3 = \langle\cG_2 \cup
\eta_0(\cM)\rangle$ and the centralizer of $\cG_3$ was $\spack(C(\cM)) = \spack(\cP^{\ot n_0} \ot \cM_a) $.
Finally, when adding the postselections, $\cG_4 = \langle \cG_3 \cup \eta_T(\cM)
\rangle$.
Equivalently by \lemref{full-spackle}, $\cG_4 = \langle \cG_3 \cup \eta_0(u(\cM))\rangle$.
Therefore,
\newcommand{\phantomlemmaeq}{\stackrel{\hphantom{\text{\lemref{centralizer}}}}{=}}
\begsub{centralizer}
C(\cG_4) &\stackrel{\text{\lemref{centralizer}}}{=} C(\cG_3) \cap C\left(\eta_0(u(\cM))\right)
\\&\phantomlemmaeq \spack\left(C(\cM)\right) \,\cap\, C\left(\eta_0(u(\cM))\right)
\\&\phantomlemmaeq \spack\bigl(C(\cM) \cap C(u(\cM))\bigr)\,.
\label{eq:centralizer-stabilizer}
\endsub
This last equality used the fact that $\spack(P)$ commutes with $\eta_0(Q)$ if and only if $P$ commutes with $Q$.
 
We therefore make the definition 
\begin{align}
	\cK := C(\cM) \cap C(u(\cM))\,.
\end{align}
Note that $\cK \subseteq C(\cM) = \cP^{\ot n_0} \ot \cM_a$ so that $r_a(\cK) \subseteq \cM_a$.
Next we take the subgroups $\cH$ and $\cH^{\perp}$ as in \lemref{good-circuit-decomp},
so that $\cM = \cH\cdot \cH^{\perp}$, $r_0(u(\cH)) = \cS_0$, 
and $r_a(u(\cH)) \subseteq \cM_a$. 
This implies that $\cM\,u(\cH) = \cS_0 \ot \cM_a$. 
Now from the definition of $\cK$ and using \lemref{centralizer} we find 
\begsub{centralizer-2}
\cK &\phantomlemmaeq C(\cM) \cap C(u(\cH \cdot \cH^{\perp}))
\\&\stackrel{\text{\lemref{centralizer}}}{=} C(\cM) 
\cap C(u(\cH)) \cap C(u(\cH^{\perp}))
\\&\stackrel{\text{\lemref{centralizer}}}{=} C(\cM\,u(\cH)) 
\cap C(u(\cH^{\perp}))
\\&\phantomlemmaeq 
C(\cS_0 \ot \cM_a) \cap C(u(\cH^{\perp}))
\,.
\endsub
By Item 4 of \lemref{good-circuit-decomp}, $C(\cM) \cap u(\cH^{\perp}) = \{I^{\ot n_0 + n_a}\}$.
By \lemref{centralizer}, since $C(C(\pm \cM)) = \pm \cM$,
\begin{align}
C(u(\cH^{\perp}))\cM
= C\left(C(\cM) \cap \pm u(\cH^{\perp})\right)
= C(\{I^{\ot n_0 + n_a}\}) = \cP^{\ot n_0 + n_a}.
\end{align}
We now are ready to characterize $\cK$, at least modulo $\cM$.  
\be
\cK\cM =
C(\cS_0 \ot \cM_a)\cM \cap C(u(\cH^{\perp}))\cM = 
C(\cS_0 \ot \cM_a)\cM = 
C(\cS_0) \ot \cM_a\,.
\ee
Thus $r_0(\cK) = C(\cS_0) = \cS_0 \cdot \cL_0$ and $r_a(\cK) \subseteq \cM_a$.

We have established that $\cS_4 \cdot \cL_4 = C(\cG_4) = \spack(\cK)$,
and $r_0(\cK) = \cS_0 \cdot \cL_0$ with $r_a(\cK) \subseteq \cM_a$.
While our goal is to characterize both the stabilizer and logical subgroups, we argue that it suffices to prove Item 1 of the Lemma about the stabilizer group.  In particular, 
we would like to show that there is a normal subgroup $\cK_{\cS}$ of $\cK$
such that $\spack(\cK_{\cS}) = \cS_4$ and $r_0(\cK_{\cS}) = \cS_0$.  Since
$\spack$ is an injective homomorphism, this would suffice to determine $\cK_{\cL} \cong \cK/\cK_{\cS}$ with $\spack(\cK_{\cL}) = \cL_4$ and $r_0(\cK_{\cL}) = \cL_0$ as well, up to multiplication by elements in $\cS_4$ and $\cS_0$ respectively. 
Therefore, let us define
\begin{align}
	\cK_{\cS} := \left\langle C(\cM) \cap u(\cM),\; \cM \cap C(u(\cM))\right\rangle \,.
\end{align}
This is a subgroup of $\cK$ because $\cM$ is an abelian group. 
We will show that $\spack(\cK_{\cS}) = \cS_4$, and also that $\cK_{\cS}$ is precisely the subgroup of $\cK$
whose elements $K$ satisfy $r_0(K) \in \cS_0$.

First we will show that $\cK_{\cS}$ is the largest subgroup of $\cK$ satisfying $\eta_0(\cK_{\cS}) \subseteq \cG_4$.
Let us define $\Delta = \langle \cM\cup u(\cM)\rangle$, and embed this group into the full circuit via $\eta_0$. 
Then we have that $\eta_0(\Delta) = \cG_4 \cap \im \eta_0$.
The fact that both $\cM$ and $u(\cM)$ are abelian groups then 
implies that $\cK_{\cS} = \cK \cap \Delta$. 
Furthermore, by repeated applications of \lemref{full-spackle} we find that 
\be \cG_4 = \langle \spack(\Delta) \cup \cG_2  \rangle \label{eq:G4-Delta}.\ee
Next we claim that 
\be
\label{eq:imspack}
\cG_2 \cap  \im \spack = \{I^{\otimes(n_0 + n_a) (T+1)}\}\,.
\ee
This is true because if $\spack(P) \in \cG_2$ then by \lemref{full-spackle} we have $\eta_0(P) \in \cG_2$ and $\cG_2 \cap \im \eta_0 = \{I^{\otimes (n_0+n_a)(T+1)}\}$, implying $P = I^{\ot n_0 + n_a}$.
Putting this together we obtain
\begsub{S4-char}
\cS_4 & \stackrel{\phantom{\text{\eq{G4-Delta}}}}{=} C(\cG_4) \cap \cG_4 \\
& \stackrel{\text{\eq{centralizer}}}{=} \spack(\cK) \cap \cG_4 \\
& \stackrel{\text{\eq{G4-Delta}}}{=} \langle \spack(\cK \cap \Delta) \cup (\spack(\cK) \cap \cG_2)\rangle \\
& \stackrel{\text{\eq{imspack}}}{=} \spack(\cK \cap \Delta)  \\
& \stackrel{\phantom{\text{\eq{imspack}}}}{=} \spack(\cK_{\cS})\,.
\endsub
This establishes the first claim in Item 1 of the lemma. 

The fact that $\cK_{\cS}$ is precisely the subgroup of $\cK$
whose elements $K$ satisfy $r_0(K) \in \cS_0$ comes from the definition of $\cK_{\cS}$ combined with Items 2 and 4 of \lemref{good-circuit-decomp}, which say that the largest part of $u(\cM)$ that is contained in $C(\cM)$ has an image under $r_0$ which is contained in $\cS_0$. 
This completes the remaining claims in Item 1. 

To show uniqueness for the logical operators, if $K$ and $K'$ were distinct elements of $\cK_{\cL}$ with $r_0(K) = r_0(K')$, then $\spack(KK')$ would also be a bare nonidentity logical operator. 
However $r_0(KK') = I$, so $KK'\in\cM$ and thus $\spack(KK') \in \cG_3 \subseteq \cG_4$.  This contradicts our assumption that $K\neq K'$, and thus establishes Item 2 of the lemma.

To obtain the last sentence of the lemma statement, we combine the above characterization of the bare logical operators
and stabilizers with \lemref{full-spackle}, which tells us that all logical operators $E$ in $C_4$ are gauge-equivalent to $\eta_0(E_{\text{in}}
\ot Q)$ for some $Q \in \cM_a$ and some $E_{\text{in}} \in \cL_0$.
Then \lemref{circuit-gauge-equiv} shows
that the error pattern $E$ is gauge-equivalent to $\eta_0(E_{\text{in}} \ot Q)$ if and only if $(V_{\ED})_E
= V_{\ED}E_{\text{in}}$.
\end{proof}

In what follows we write $\cG := \cG_4$, $\cS := \cS_4$, $\cL := \cL_4$, since we will 
focus on the final code construction and not the intermediate steps used in the proof. 

\subsection{Distance}\label{sec:distance}

\begin{definition}
Given a collection of errors $E = (E_t)_t$, define the weight $|E_t|$ to be 
the number of nonidentity terms in $E_t$ and define $|E| = \sum_t |E_t|$.  
\end{definition}

\begin{definition}\label{def:FT}
A Clifford subcircuit $V$ is a fault-tolerant sub-projection when
\begin{enumerate}
	\item $V=c\Pi$ where $0< c\leq 1$ and $\Pi$ is a projection onto a stabilizer subspace.
	\item
	For any error pattern $E$, either $V_E = 0$ or there are Pauli errors $E_{\mathrm{in}}$ and $E_{\mathrm{out}}$ such that $V_E = E_{\mathrm{out}}VE_{\mathrm{in}}$ and $|E_{\mathrm{in}}E_{\mathrm{out}}| \le |E|$.
\end{enumerate}
\end{definition}

The second part of the definition is a relaxed form of the condition from quantum fault-tolerance that
errors should not propagate to become equivalent to larger errors (explained in \cite{gottesman2009introduction}).
One interpretation of this condition is that every error pattern $E$ factors into the
product of an incorrect measurement outcome for the code syndrome (the occurrence of any
such error would be equivalent to $PVP$ for some Pauli $P$) with some error of weight $\le |E|$ on the circuit's input or output.
Another interpretation is that no subcircuit of a fault-tolerant circuit transforms the logical subspace of the base code into a smaller-distance subspace.
This is sufficient for the error-correcting code, since it ensures that the logical subspace is preserved at high distance by the gauge transformations associated with the circuit gates.
And in fact any circuit error that results in an incorrect syndrome measurement---but otherwise leaves the circuit's data untouched and does not cause any postselection rejections---is itself in the gauge group of the code, as it is a circuit identity.

We note also some easy consequences of being proportional to a stabilizer projector.
\begin{lemma}\label{lem:stab-proj}
Let $V$ be a Clifford sub-projection, i.e.~satisfying Part 1 of \defref{FT}.  Then:
\benum
\item
 For a Pauli operator $P$, if $V \ne PVP$ then $VPV = 0$.
\item For any two Pauli operators $P$ and $Q$, if $V = QVP$ then also $V = QVQ$ and $V = PVP$.
\eenum
\end{lemma}

\begin{proof}
\benum
\item
The space $\im V$ is a stabilizer subspace, say corresponding to stabilizer subgroup $\langle s_1,
\ldots, s_k\rangle$.  The space $\im
PVP$ is also a stabilizer subspace, this time stabilized by
\be \langle Ps_1P, \ldots, Ps_kP\rangle = \langle (-1)^{a_1}s_1, \ldots,
(-1)^{a_k}s_k\rangle,\ee
where $a_i$ is defined by $Ps_iPs_i = (-1)^{a_i}I$.  Note that $V=PVP$ precisely when $a_i=0$ for all $i$.
Thus if $V\neq PVP$ then at least one $a_i=1$; without loss of generality say
$a_1=1$. Since $\im V\subseteq 
\im (I+s_1)$ and $\im PVP\subseteq \im (I-s_1)$, these subspaces are orthogonal, implying
$0=V(PVP) = VPV$.
\item Since $V$ is a sub-projection we have $cV = V^\dag V = PVQQVP = PV^2P = cPVP$. The
  same argument implies $V=QVQ$. 
\eenum
\end{proof}

A key feature of \defref{FT} is {\em composability}: multiple fault-tolerant circuits will
form a larger fault-tolerant circuit in their composition, as long as all of them commute
with each other. 
\begin{lemma}\label{lem:composability}
If $V^{(1)},\ldots,V^{(T)}$ are mutually commuting fault-tolerant Clifford subcircuits
partitioning a circuit  $U = V^{(T)} \cdots V^{(1)}$ with $U \ne 0$,
then $U$ is fault-tolerant as well.
\end{lemma}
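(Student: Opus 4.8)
The plan is to prove the statement by induction on $T$, reducing immediately to the two-block case $T=2$; the general case then follows since if $U = V^{(T)} \cdots V^{(1)}$ we can write $U = V^{(T)} \circ (V^{(T-1)} \cdots V^{(1)})$, treat the second factor as a single fault-tolerant subcircuit by the inductive hypothesis, and apply the two-block result once. So I would only need to argue: if $U = V^{(2)} V^{(1)}$ with both $V^{(1)}, V^{(2)}$ fault-tolerant, then $U$ is fault-tolerant.

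For the two-block case, let $E$ be an error pattern on the whole circuit $U$, and split it as $E = (E^{(1)}, E^{(2)})$ according to which subcircuit each circuit location belongs to, together with the errors living on the ``interface'' wires between $V^{(1)}$ and $V^{(2)}$, which I would assign (say) to $V^{(2)}$'s input. First I would apply fault-tolerance of $V^{(1)}$: either $V^{(1)}_{E^{(1)}} = 0$, in which case $U_E = 0$ and we are done, or there is a Pauli $F$ on the inputs of $V^{(1)}$ (equivalently, the inputs of $U$) with $V^{(1)}_{E^{(1)}} = V^{(1)} F$ and $|F| \le |E^{(1)}|$. In the latter case, $U_E = V^{(2)}_{E^{(2)}} V^{(1)}_{E^{(1)}} = V^{(2)}_{E^{(2)}} V^{(1)} F$. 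Now I want to absorb the leftover $V^{(1)} F$ so as to view everything as an error pattern for $V^{(2)}$ alone: since $F$ lives on the input wires of $U$, which are also the input wires of $V^{(1)}$, and $V^{(1)}$ is a Clifford circuit (with postselections), I would use \lemref{circuit-gauge-equiv}-style propagation — more simply, the fact that $V^{(1)}_{E^{(1)}} = V^{(1)} F$ means the combined object $V^{(2)}_{E^{(2)}} V^{(1)}_{E^{(1)}}$ equals $V^{(2)}_{E^{(2)}} V^{(1)} F$, and we can regard $V^{(2)}_{E^{(2)}} V^{(1)}$ as the circuit $U$ with errors only in the second block. Applying fault-tolerance of $V^{(2)}$ to the error pattern $E^{(2)}$ gives either $V^{(2)}_{E^{(2)}} = 0$ (hence $U_E = 0$), or a Pauli $G$ on the inputs of $V^{(2)}$ with $V^{(2)}_{E^{(2)}} = V^{(2)} G$ and $|G| \le |E^{(2)}|$. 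But $G$ sits on the interface wires, which are the outputs of $V^{(1)}$; propagate $G$ backwards through the Clifford $V^{(1)}$ (conjugation by a Clifford sends Paulis to Paulis of the same weight, and postselections only shrink support) to obtain a Pauli $G'$ on the inputs of $U$ with $V^{(1)} G = G' V^{(1)}$ and $|G'| \le |G|$. Then $U_E = V^{(2)} G' V^{(1)} F$... wait, I need to commute $G'$ past — actually I should propagate $G$ through to get $V^{(1)}$ acting first, i.e. $V^{(2)}_{E^{(2)}} V^{(1)} F = V^{(2)} G V^{(1)} F$, and then pull $G$ to the front: $G V^{(1)} = V^{(1)} G'$ for a Pauli $G'$ on $U$'s inputs with $|G'| \le |G|$ (again using that $V^{(1)}$ is Clifford + postselection), so $U_E = V^{(2)} V^{(1)} G' F = U (G'F)$ up to a sign. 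Since $|G'F| \le |G'| + |F| \le |G| + |F| \le |E^{(2)}| + |E^{(1)}| = |E|$, setting $E' = G'F$ completes the $T=2$ case.

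The main obstacle I anticipate is bookkeeping the interface carefully: circuit locations at the boundary between $V^{(1)}$ and $V^{(2)}$ must be assigned unambiguously to exactly one block so that $|E^{(1)}| + |E^{(2)}| = |E|$ with no double counting, and one must check that ``pushing'' a Pauli output of $V^{(1)}$ backward through $V^{(1)}$ to its inputs is legitimate — this is exactly the content that conjugating a Pauli by the Clifford unitary $U_{\ED}^{(1)}$ and then passing through the postselections $\bra{0}$ either annihilates it or leaves a Pauli of no greater weight, which is immediate since $\bra{0} Z = \bra{0}$, $\bra{0} X$ is a new bra state (absorbable), and conjugation by Cliffords preserves weight. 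A secondary subtlety is the possibility $V^{(1)}_{E^{(1)}} = V^{(1)} F$ but $V^{(2)} V^{(1)} = 0$ identically for reasons unrelated to $E$; but that just means $U = 0$ as an operator, which is a degenerate circuit we may exclude or handle by noting $U_E = 0$ trivially. I would present the two-block argument in full and then dispatch the induction in one sentence.
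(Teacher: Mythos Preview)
Your argument has a genuine gap at the step where you pull $G$ back through $V^{(1)}$. You write that ``conjugation by Cliffords preserves weight,'' but this is false: for instance, $\cnot\,(X\ot I)\,\cnot^\dag = X\ot X$, so a weight-$1$ Pauli becomes weight~$2$. In general, pushing $G$ (supported on the output data wires of $V^{(1)}$) back through the Clifford unitary inside $V^{(1)}$ can blow up its weight arbitrarily, and the resulting operator may even acquire $X$-type support on the ancilla inputs, so that $G\,V^{(1)}$ need not equal $V^{(1)}$ times any Pauli on the data inputs at all by this route. Your remarks about $\bra{0}Z=\bra{0}$ and ``$\bra{0}X$ absorbable'' do not address this.

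The conclusion you want at that step, namely $G\,V^{(1)} = V^{(1)} G'$ with $|G'|\le |G|$, \emph{is} true, but for a different reason: treat $G$ as an error pattern on $V^{(1)}$ supported at its final time slice; then $(V^{(1)})_G = G\,V^{(1)}$ (since $G$ lives on data wires and commutes with the ancilla postselections), and a second application of the fault-tolerance of $V^{(1)}$ yields the desired $G'$. So your two-block argument can be repaired, but only by invoking \defref{FT} for $V^{(1)}$ twice.

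The paper avoids this detour by processing the subcircuits in the opposite order: start with $V^{(T)}$, use its fault-tolerance to replace $E_T$ by $E_T'$ on the inputs of $V^{(T)}$ with $|E_T'|\le|E_T|$, then absorb $E_T'$ into the error pattern for $V^{(T-1)}$ (its outputs coincide with those inputs), and repeat. Each subcircuit's fault-tolerance is used exactly once, and the weight bound telescopes directly to $|E_1'|\le\sum_t|E_t|=|E|$. Reordering your two-block argument this way---apply \defref{FT} to $V^{(2)}$ first, then fold the resulting $G$ into $E^{(1)}$ and apply \defref{FT} to $V^{(1)}$---gives the same clean proof without any backward propagation through an already-processed block.
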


\begin{proof}
Part 1 of \defref{FT} follows directly from induction, so we prove that $U$ satisfies Item 2.
We assume that $T = 2$, noting that all higher values of $T$ follow from induction.
We need the mutual commutation assumption here to ensure that the compositions remain Clifford sub-projections (satisfying Part 1 of \defref{FT}) throughout the induction.

An error $E$ on $U$ partitions into errors $E_1$ and $E_2$ on 
the separate subcircuits.
By part 2 of \defref{FT},
\be U_E = V^{(2)}_{E_2} V^{(1)}_{E_1} 
= \left[E^{(2)}_{\text{out}}V^{(2)}E^{(2)}_{\text{in}}\right]
\left[E^{(1)}_{\text{out}}V^{(1)}E^{(1)}_{\text{in}}\right], 
\ee
where 
\be |E^{(1)}_{\text{in}}E^{(1)}_{\text{out}}| +
|E^{(2)}_{\text{in}}E^{(2)}_{\text{out}}| \le |E_1| + |E_2| =  |E|.
\label{eq:split-wt}\ee
Suppose that $U_E \ne 0$, implying that
$V^{(2)}E_{\text{in}}^{(2)}E_{\text{out}}^{(1)}V^{(1)}\neq 0$.

Since $V^{(1)}$ and $V^{(2)}$ are Clifford sub-projectors, express
$V^{(1)} = c_1\sum_{A \in \cA} A$ and $V^{(2)} = c_2\sum_{B \in \cB} B$
for $\cA$ and $\cB$ abelian subgroups of $\cP^{\ot n}$.
Let $P = E_{\text{in}}^{(2)}E_{\text{out}}^{(1)}$ so that
\[ c_1c_2\sum_{A \in \cA}\sum_{B \in \cB}BPA = \left(c_2\sum_{B \in \cB}B\right)P\left(c_1\sum_{A \in \cA}A\right) \ne 0\,.\]
In particular, this implies that there are no $A \in \cA$ and $B \in \cB$ such that $P = -BPA$.
Since $P$ and $A$ and $B$ are all Pauli operators which either commute or anticommute with each other, this implies $P \in C(\langle A\rangle)$
whenever $A = B^{-1} = B$, and we never have $A = -B$ since $U = V^{(2)}V^{(1)} \ne 0$.
Thus $P \in C(\pm\cA \cap\pm\cB)$, implying by \lemref{centralizer} that $P \in C(\pm\cB)C(\pm\cA) = C(\cB)C(\cA)$.

So $P = RQ$ where $Q \in C(\cA)$ and $R \in C(\cB)$, so that $QV^{(1)} = V^{(1)}Q$ and $RV^{(2)} = V^{(2)}R$.
Therefore:
\begsub{ft-composability}
U_E & = 
E^{(2)}_{\text{out}}V^{(2)}E_{\text{in}}^{(2)}E_{\text{out}}^{(1)}V^{(1)}E^{(1)}_{\text{in}}
\\&= E^{(2)}_{\text{out}}V^{(2)}\,[P]\, V^{(1)}E^{(1)}_{\text{in}}
\\&= E^{(2)}_{\text{out}}V^{(2)}\,[RQ]\, V^{(1)}E^{(1)}_{\text{in}}
\\&= [E^{(2)}_{\text{out}}R]\,V^{(2)}V^{(1)}\,[QE^{(1)}_{\text{in}}]
\\&= [E^{(2)}_{\text{out}}R]\,U\,[QE^{(1)}_{\text{in}}]\,.
\endsub
Finally we bound
$|E^{(2)}_{\text{out}}RQE^{(1)}_{\text{in}}|=|E^{(2)}_{\text{out}}E_{\text{in}}^{(2)}E_{\text{out}}^{(1)}E^{(1)}_{\text{in}}| \leq |E|$
  using \eq{split-wt} and the triangle inequality.
\end{proof}

\begin{theorem}[Main theorem]\label{thm:dist-from-FT}	
If $V$ is a fault-tolerant error-detection circuit (i.e.\ satisfying 
Definitions~\ref{def:error-detect} and \ref{def:FT}) for an $[n,k,d]$ stabilizer code,
then the corresponding localized code constructed in \secref{construction} is an $[O(|V|),k,d]$ subsystem code.

If $w$ is the largest arity of a gate in $V$, then each
gauge generator in the code has weight at most $w+1$ and each qubit is measured by $\leq
2w+2$ gauge generators. 
\end{theorem}

\begin{proof}
The sparsity of the code and the number of qubits used is established in \secref{construction}.
The preservation of the number of encoded qubits $k$ is established by \lemref{isomorphism}.
All that remains is to show that the distance $d$ is also preserved.
	
Let $E$ be a Pauli error on the localized code, which we can think of 
equivalently as an error pattern on the circuit $V$.  Suppose that $E$ 
is a nontrivial dressed logical operator (so that $E \not\in \cG$).
We would like to show that $|E|\geq d$.

By \lemref{isomorphism}, if $E$ is a nontrivial logical operator then $V_E = VE_{\text{in}}$ for some $E_{\text{in}}$ which is a nontrivial logical operator of $C_0$, and in particular $V_E \ne 0$.
By Item 2 of \defref{FT}, there then exist $E_{\text{in}}'$ and $E_{\text{out}}'$ such that $V_E = E_{\text{out}}'VE_{\text{in}}'$ and $|E_{\text{in}}'E_{\text{out}}'| \le |E|$.
By \lemref{stab-proj}, since $VE_{\text{in}} = E_{\text{out}}'VE_{\text{in}}' \ne 0$ we
find that $V = E_{\text{out}}'VE_{\text{in}}'E_{\text{in}} =
E_{\text{out}}'VE_{\text{out}}' \ne 0$. Therefore by left- and right-multiplying both sides of the equality by $E_{\text{out}}'$, we find $V = VE_{\text{in}}'E_{\text{in}}E_{\text{out}}'$.
Since $V$ is a scalar multiple of a projection to the codespace, $E_{\text{in}}'E_{\text{in}}E_{\text{out}}'$ must be a stabilizer of $C_0$.
So since $E_{\text{in}}$ is a nontrivial logical operator of $C_0$ and $E_{\text{in}}'E_{\text{in}}E_{\text{out}}'$ is a stabilizer, $E_{\text{in}}'E_{\text{out}}'$ is a nontrivial logical operator of $C_0$ as well.
Thus $d \le |E_{\text{in}}'E_{\text{out}}'| \le |E|$.
\end{proof}


\section{Fault-tolerant gadgets}\label{sec:ft-gadgets}

The final piece of our construction is a fault-tolerant (by \defref{FT}) gadget for
measuring a single stabilizer generator, say of weight $w$.   With this we can construct
fault-tolerant circuits for any stabilizer code, and therefore 
sparse subsystem codes from any stabilizer code, completing
the proof of \thmref{main}.

The requirements for fault-tolerance here are markedly different from
those surrounding existing fault-tolerant measurement strategies.  
Our circuits are restricted to
stabilizer circuits, and cannot make use of classical feedback or
post-processing. 
On the other hand, the circuits here only need to detect errors rather than correct them, and furthermore only need to detect them up to a global change in Pauli frame.
The gadgets we use are hence
a variation on the DiVincenzo-Shor cat-state
method~\cite{Shor96,DS96}, modified in several ways.

The original cat-state gadgets would prepare a $w$-qubit cat
state $\frac{1}{\sqrt 2}(\ket 0^{\ot w} + \ket 1^{\ot w})$ and perform a
CNOT from each qubit in the cat state to each qubit that we
want to measure.  Making this fault-tolerant requires additional ``ancilla verification''
steps and also requires repeating the measurement multiple times and taking a majority
vote.  This uses non-Clifford gates (to compute majority) and also would require $O(w^2)$
size to measure a $w$-qubit stabilizer generator.

Due to the relaxed fault-tolerance requirement, we may omit the repetition of the measurement.
We may also use postselection gates instead of measurement gates, though in general for Clifford circuits these are equivalent up to a change in Pauli frame depending on measurement outcomes \cite{knill2005quantum}.

One feature of our modified gadget is that it uses $\ket{+}^{\otimes w}$ instead of the cat state.
We call these $w$ qubits ``vertex qubits'' and also introduce ``edge qubits'' which play the
role of ancilla verification.  These vertex and edge qubits correspond to an expander graph with $w$ vertices and
$O(w)$ edges.  If the original stabilizer is $P_1 \cdots P_w$ for single-qubit Paulis
$P_1,\ldots, P_w$ then for each $i$ we will perform a controlled $P_i$ from vertex qubit
$i$ to data qubit $i$.  In the same step we will perform CNOTs from vertex qubit $i$ to
the edge qubits corresponding to incident edges.
We will see that by building this pattern of CNOTs from an expander
graph, the gadget can be made fault-tolerant.

First, we define these terms more precisely.

\begin{definition}\label{def:gadget}
A fault-tolerant postselection gadget for a Pauli $P \in \cP^{\ot w}$ is a fault-tolerant
(Def.\ \ref{def:FT}) subcircuit $V$ with $w$ inputs and $w$ outputs,
such that $V$ is proportional to a projector into the ${+}1$ eigenspace of $P$.
\end{definition}

\begin{definition}
For an undirected graph $G = (V_G,E_G)$, $v\in V_G$ and $S\subseteq V_G$, define $\partial v = \partial \{v\}$ and
define $\partial S$ to be the set of edges having exactly one endpoint in $S$.  Say
that $(V_G,E_G)$ has edge expansion $\phi$ if $|\partial S| \geq \phi |S|$
whenever $|S| \leq |V_G|/2$.
\end{definition}
We can identify subsets of $V_G, E_G$ with the vector spaces
$\bbF_2^{V_G},\bbF_2^{E_G}$, in which case $\partial$ is a $\bbF_2$-linear operator from 
$\bbF_2^{V_G}$ to $\bbF_2^{E_G}$.

\begin{lemma}\label{lem:gadget}
\mbox{} 
\benum
\item 
Given $P \in \cP^{\ot w}$ and a undirected graph $(V_G,E_G)$ 
with 
 degree $d$ and $|V_G|\geq w$, there exists a postselection gadget for $P$
that has size at most $O(|E_G|)$, and is composed of
Clifford gates that act on no more
than $d+2$ wires at a time.  
\item If $(V_G,E_G)$ has edge expansion $\geq 1$ then this gadget is
  fault-tolerant.
\eenum
\end{lemma}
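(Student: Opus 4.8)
The plan is to explicitly construct the gadget and then verify the two claims in turn. For the construction: fix an injection of $[w]$ into $V_G$ so that data qubit $i$ is associated with vertex qubit $v_i$ (if $|V_G|>w$, the extra vertex qubits carry no controlled-$P$ but still participate in the edge CNOTs). Initialize every vertex qubit in $\ket{+}$ and every edge qubit in $\ket{0}$. In a single time step, for each vertex $v$ apply the $d+1$ (or fewer) gates: a controlled-$P_i$ from $v=v_i$ onto data qubit $i$ (when $v$ is one of the $v_i$), and a CNOT from $v$ onto each of the $\le d$ edge qubits in $\partial v$. Since the edge qubits are shared between their two endpoints, do this carefully so that each individual elementary gate touches at most $d+2$ wires — actually each gate in this description is at most a two-qubit gate, but the ``gadget element'' at vertex $v$ naturally groups into one multi-qubit Clifford on $\le d+2$ wires ($v$ itself, its data qubit, and its $\le d$ incident edges). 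Finally postselect every vertex qubit onto $\ket{+}$ and every edge qubit onto $\ket{0}$; the inputs and outputs of the subcircuit are the $w$ data wires. The size is the number of gates, which is $O(|V_G| + |E_G|) = O(|E_G|)$ assuming the graph has no isolated vertices (or we simply pad $|E_G|\ge w$). This proves part 1 once we check the circuit is proportional to a projector onto the $+1$ eigenspace of $P = P_1\ot\cdots\ot P_w$.

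For the projector claim: with all edge qubits in $\ket 0$ and postselected back to $\ket 0$, the CNOTs among vertex/edge qubits enforce that the surviving vertex-qubit configurations (in the $X$ basis, since they start and end in $\ket{+}$) lie in the kernel of $\partial^T$, i.e.\ correspond to $Z$-type stabilizers of the graph code; but postselecting all vertex qubits onto $\ket{+}$ already fixes the vertex register, so the net effect is that $\bra{+}^{\ot V_G}$ absorbs the CNOT network up to a nonzero scalar, and what remains acting on the data is $\propto \bra{+}^{\ot V_G}\prod_i (\text{c-}P_i)\ket{+}^{\ot V_G}$. Expanding each $\ket{+}$ control, $\frac12\sum_{b}\bra{+}(\text{c-}P)\ket{+}$ with control bit $b$ gives $\frac12(I + P)$ on the target; taking the product over $i$ yields $2^{-w}\prod_i(I+P_i) = 2^{-w}\cdot 2^{w-1}(\cdots)$ — more precisely it is proportional to $\frac12(I+P)$ on the data register when the $P_i$ are all nontrivial. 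So $V \propto \frac12(I+P)$, a projector onto the $+1$ eigenspace of $P$, establishing Definition~\ref{def:gadget}. (The edge-qubit network contributes only an overall nonzero constant and, crucially, does \emph{not} alter this, which is why part 1 holds independently of expansion.)

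For part 2, fault-tolerance, I would use Definition~\ref{def:FT} directly. Take an error pattern $E$ on the gadget; I want to show $V_E = 0$ or $V_E = V E'$ with $|E'|\le |E|$ on the data inputs. Propagate all errors forward to the final time slice using the squeegee/commutation machinery: conjugating a weight-$m$ error through the CNOT/controlled-$P$ network can only increase its weight, so naive propagation is not enough — the key is that the postselections kill most of what spreads. After propagation, the error is a product of $Z$'s on vertex qubits (which anticommute with $\bra{+}$ postselections — these either get detected, giving $V_E=0$, or combine), $X$'s on edge qubits (anticommute with $\bra 0$ — likewise), plus residual Pauli on the data. An $X$-type error on a \emph{vertex} qubit commutes with its own postselection but propagates forward through the step to: the matching data qubit (via controlled-$P_i$, producing a $P_i$ there) and all incident edge qubits (via CNOT, producing $X$'s there). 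If a set $S$ of vertex qubits carries such propagating $X$'s, they flip $|\partial S|$ edge postselections; by edge expansion $\ge 1$, if $0<|S|\le |V_G|/2$ then $|\partial S|\ge |S|$, so at least $|S|$ edge postselections fail unless those edge failures are themselves cancelled by pre-existing edge errors in $E$ — and each such cancellation costs one unit of weight in $E$. Balancing: the weight of $E$ needed to avoid detection is at least the number of ``effective'' data errors produced, which is $\le |S|$, and $|S| \le |\partial S| \le (\text{edge errors in } E)$, so $|E'|\le |E|$ follows; the complementary case $|S|>|V_G|/2$ is handled by replacing $S$ with $V_G\setminus S$ (vertex postselection onto $\ket+$ means the all-$X$-on-vertices operator acts trivially up to sign, so only the parity class of $S$ matters).

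\textbf{Main obstacle.} The delicate part is part 2: carefully tracking how a Pauli error propagates through the combined controlled-$P_i$ and CNOT network to the postselections, and turning ``enough postselections fail'' into the clean inequality $|E'|\le|E|$ of Definition~\ref{def:FT}. In particular one must (i) argue that $X$-on-vertex errors are the only ones that can create data errors while surviving, (ii) reduce their support mod $2$ so edge expansion applies with the $|S|\le|V_G|/2$ hypothesis, and (iii) account for the possibility that errors on data qubits themselves, or $Z$-errors that slid backward, contribute — these I expect to handle by noting they only help (a data-input error is already of the allowed form) or are detected. The expander property enters in exactly one place: it guarantees that a cheap ($\le |S|$ weight) vertex error cannot hide from the $\ge |S|$ edge checks without paying for it elsewhere, so the adversary can never amplify weight.
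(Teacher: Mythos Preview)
Your part~1 analysis contains a genuine error. You claim that ``the edge-qubit network contributes only an overall nonzero constant and, crucially, does \emph{not} alter this,'' and then compute
\[
\bra{+}^{\ot V_G}\prod_i (\text{c-}P_i)\ket{+}^{\ot V_G} \;\propto\; \prod_i \tfrac12(I+P_i),
\]
asserting this is proportional to $\tfrac12(I+P)$. But $\prod_i(I+P_i)$ and $I+P$ are different operators: the former projects onto the simultaneous $+1$ eigenspace of each $P_i$ individually, while the latter projects onto the $+1$ eigenspace of the product $P=P_1\ot\cdots\ot P_w$. (For $P_1=P_2=X$, the first is $\proj{++}$, the second is $\proj{++}+\proj{--}$.) The edge postselections are precisely what fix this: expanding the vertex register in the computational basis as $\sum_x\ket{x}$, the controlled gates produce $\sum_x P^x\ket\psi\ot\ket x\ot\ket{\partial x}$, the vertex postselection onto $\bra{+}^{\ot|V_G|}$ sums over all $x$, and the edge postselection onto $\bra{0}^{\ot|E_G|}$ restricts to $\partial x=0$. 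For a connected graph this forces $x\in\{0^w,1^w\}$, yielding $(I+P)/2^w$. So the edge block is essential for correctness, not just fault-tolerance; without it you do not get a postselection gadget for $P$ at all.

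For part~2 your outline has the right ingredients (postselections detect errors, the all-$X$ vertex operator acts trivially so you may pass to the complement, expansion bounds the edge syndrome), but the accounting is loose. The paper's argument is cleaner in two ways. First, it pushes errors to the \emph{input} time slice rather than the output, obtaining a single $E'$ on the data inputs and then proving $|E'|\le|E|$ by showing the $X$- and $Z$-parts satisfy $|E_X'|\le|E_X|$ and $|E_Z'|\le|E_Z|$ separately (using the inclusion--exclusion $|E'|=|E_X'|+|E_Z'|-|E_X'\cap E_Z'|$). Second, for the $X$-part it reduces any gauge-equivalent $E_X$ to a canonical ``reduced form'' supported only on the data inputs and the edge inputs, and then observes that the only reduced-form $X$-type gauge equivalences are $X_i'\equiv\prod_{e\in\partial v_i}X_e'$, so moving weight off a set $U_G$ of data inputs costs $|\partial U_G|\ge|U_G|$ on the edge block by expansion. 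Your forward-propagation sketch would need to replicate this bookkeeping---in particular to handle $Z$-errors on vertex qubits (which back-propagate to multiple data qubits via the controlled-$P_i$) and mixed $X/Z$ errors---and as written it does not.
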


\begin{proof}
%
Our construction uses two blocks of ancilla wires.  There is one block
with $|V_G|$ wires called the {\em vertex block} and another block of $|E_G|$
ancillas that will be the \emph{edge block}.  The original $w$
qubits being measured we refer to as the {\em data block}.  These are
depicted in \fig{ftmeas}.

\begin{figure}[t!]
\includegraphics[width=.75\textwidth]{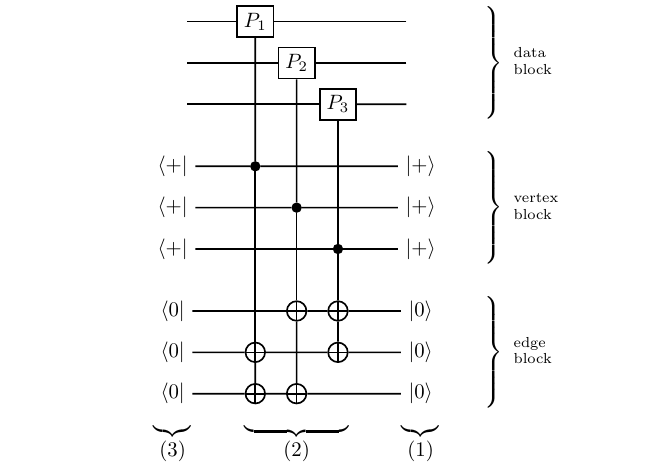}
\caption{
  Example configuration for the fault-tolerant postselection gadget, for
  $w = 3$.
  The data block consists of input wires which are postselected to a ${+}1$
  eigenstate of $P_1 P_2 P_3$. The vertex block is prepared in the
  $\ket{+}^{\otimes 3}$ state, and the edge block is used for parity checks on that 
  state.  Time goes from right to left.
}
\label{fig:ftmeas}
\end{figure}

The circuit consists of the following steps, corresponding to the
labeled steps in \fig{ftmeas}.
\benum
\item The vertex block is initialized to $\ket + ^{\ot |V_G|}$.
  The edge block is initialized to $\ket 0^{\ot |E_G|}$.
\item For each
  $i\in [w]$ perform a controlled-$P_i$ gate from vertex qubit $i$ to data
  qubit $i$.  At the same time, perform a CNOT from vertex qubit $i$ to each edge qubit in
  $\partial v_i$.   Thus each 
of the $w$ vertex qubits is the control for $d+1$ controlled operations.  Since
these commute, we can perform their product as a single gate (for each vertex
qubit).
\item Postselect the vertex block onto 
$\bra{+}^{\ot |V_G|}$ and
  the edge block onto $\bra 0^{\ot |E_G|}$.
 \eenum

The total number of wires in each block is $w$, $|V_G|$, and $|E_G|$ 
for the data, vertex, and edge blocks respectively, 
and since each wire has constant length, the total size of the gadget is 
$O(w)+O(|V_G|)+O(|E_G|) = O(|E_G|)$ as claimed. 

For the analysis, we first split up $V$ into subcircuits
\[V = A_f V^{(|V_G|)} \cdots V^{(1)} A_i\,, \]
where $A_f$ is the postselection to $\ket{0}^{\ot |E_G|}$ on the edge block, $A_i$ is the initialization to $\ket{0}^{\ot |E_G|}$ on the edge block, and for the $v$th wire of the vertex block, $V^{(v)}$ is the subcircuit consisting of the initialization of that wire to $\ket{+}$, the parallelized controlled gate acting on that wire, and finally its postselection to $\ket{+}$.

For each $v \in V_G$, $V^{(v)}$ is a fault-tolerant circuit implementing a projection to the ${+}1$ eigenspace of $P_v\prod_{e \in \partial v} X_e$, where the $P_v$ here is taken to act on the $v$th wire of the data block.
To show that $V^{(v)}$ implements this projection, note that since the state on the $v$th vertex wire is $(\ket{0} + \ket{1})/\sqrt{2}$, and the controlled gate acts as identity when the $v$th vertex wire is $\ket{0}$, we have
\[ V^{(v)} = \frac{1}{2}\left[I^{\ot w} \ot I^{\ot |E_G|}\right] + \frac{1}{2}\left[P_v\prod_{e \in \partial v} X_e\right]. \]
To show that $V^{(v)}$ is fault-tolerant, consider some error pattern $E$ on $V^{(v)}$.
Decompose $E = E_{\text{data}}\ot E_{\text{vertex}}\ot E_{\text{edge}}$ acting on the respective blocks of the gadget.
Since the only sites of $V^{(v)}$ on the data and edge blocks are the inputs and outputs to $V^{(v)}$, we can without loss of generality assume that $E_{\text{data}}$ and $E_{\text{edge}}$ are both identity operators, since any errors on the inputs and output could only increase $|E|$ by at least as much as they increased $|E_{\text{in}}E_{\text{out}}|$.
Since any $X$ operator acts as identity on the vertex block, we only have to consider $Z$ errors for $E_{\text{vertex}}$.
But a single $Z$ on the vertex block makes the circuit evaluate to $0$, and a pair of $Z$ errors propagates to become equivalent to a single $Z$ on the input or the output of any adjacent edge block.
Therefore $|E_{\text{in}}E_{\text{out}}| \le |E_{\text{vertex}}|$ and so $V^{(v)}$ is fault-tolerant.

By \lemref{composability} then, $V^{(|V_G|)} \cdots V^{(1)}$ is a fault-tolerant circuit that projects to the common ${+}1$ eigenspace of $P_v\prod_{e \in \partial v} X_e$ for each $v \in V_G$.
So, letting $P^{S}$ denote $\prod_{v \in S} P_v$ for $S \subseteq V_G$, and using the fact that
$\prod_{v \in S} \prod_{e \in \partial v} X_e = \prod_{e \in \partial S} X_e$,
\begin{align*}
V
&= \left[I^{\ot w} \ot \bra{0}^{\ot |E_G|}\right] \left[\frac{1}{2^{|V_G|}}\sum_{S \subseteq V_G} P^S\prod_{e \in \partial S} X_e\right] \left[I^{\ot w}\ot\ket{0}^{\ot |E_G|}\right]
\\&= \frac{1}{2^{|V_G|}}\sum_{S \subseteq V_G} P^S\cdot \bra{0}^{\ot |E_G|}\left[\prod_{e \in \partial S} X_e\right]\ket{0}^{\ot |E_G|}
\\&= \frac{1}{2^{|V_G|}}\sum_{S \subseteq V_G, \partial S = \emptyset} P^S
\\&= \frac{1}{2^{|V_G|}} (I^{\ot w} + P)
\,,
\end{align*}
where we used the fact that $G$ is connected to conclude that $\emptyset$ and $V_G$ itself are the only subsets $S$ of $V_G$ such that $\partial S=\emptyset$.
Thus $V$ is proportional to a projection onto the ${+}1$ eigenspace of $P$.

Now we show $V$ is fault-tolerant.
Consider some error $E$ on $V$, and we want to show that if $V_E \ne 0$ then there are $E_{\text{in}}$ and $E_{\text{out}}$ so that $V_E = E_{\text{out}}VE_{\text{in}}$ and $|E_{\text{in}}E_{\text{out}}| \le |E|$.
By the fault-tolerance of $V^{(|V_G|)} \cdots V^{(1)}$ and \lemref{composability}, either $V_E = 0$ or
\be V_E = A_fE_fV^{(|V_G|)} \cdots V^{(1)}E_iA_i\ee
so that $|E_iE_f|\le |E|$.
The only sites where $E_i$ and $E_f$ could act on the data block are the inputs and outputs of $V$, where they could only contribute to $|E_iE_f|$ and therefore $|E|$ by at least as much as they contribute to $|E_{\text{in}}E_{\text{out}}|$, so we can assume that $E_i$ and $E_f$ do not act on the data block.
We can assume that $E_i$ and $E_f$ contain no $Z$ errors on the edge block because those act as identity immediately following an initialization to $\ket{0}$ or preceding a postselection to $\ket{0}$.
Since $X$ errors on the edge block also commute through $V^{(|V_G|)} \cdots V^{(1)}$, we find that without loss of generality,
\be V_E = A_fV^{(|V_G|)} \cdots V^{(1)}E_iA_i \ee
where $|E_i| \le |E|$ and $E_i$ consists only of $X$ operators on the edge block.

Therefore, letting $W$ be the set of edge block wires that $E_i$ acts on,
\begin{align*}
V_E
&= \frac{1}{2^{|V_G|}}\sum_{S \subseteq V_G} P^S\cdot \bra{0}^{\ot |E_G|}\left[\prod_{e \in \partial S} X_e\right]E_i\,\ket{0}^{\ot |E_G|}
\\&= \frac{1}{2^{|V_G|}}\sum_{S \subseteq V_G, \partial S = W} P^S
\,.
\end{align*}
If there is no $S$ so that $\partial S = W$, then $V_E = 0$, so we assume such an $S$ exists.
Since $G$ is connected, this $S$ must be unique up to taking its complement.
Since $G$ has edge expansion $\phi\geq 1$, either $|S| \le |W|$ or $|S^{c}| \le |W|$.
Without loss of generality, choose $S$ so that $|S| \le |S^c|$ and $|S| \le |W|$.
Since $P^{S^c} = P^SP$, we have
\begin{align*}
V_E
&= \frac{1}{2^{|V_G|}} (P^S + P^SP)
\\&= P^S \frac{1}{2^{|V_G|}} (I^{\ot w} + P)
\\&= P^S\, V
\,,
\end{align*}
where $|P^S| \le |S| \le |W| = |E_i| \le |E|$.
We conclude that if $V_E \ne 0$, then $V_E = E_{\text{out}}VE_{\text{in}}$ for some $E_{\text{in}}$ and $E_{\text{out}}$ satisfying $|E_{\text{in}}E_{\text{out}}| \leq |E|$, so that the gadget is fault-tolerant.
\end{proof}

\begin{theorem}\label{thm:ftgadget}
For each $P \in \cP^w$, there exist fault-tolerant postselection gadgets
composed of Clifford gates that act on at most $10$ wires at a time, and which
have size at most $O(w)$.
\end{theorem}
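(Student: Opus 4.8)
The plan is to derive this as an immediate consequence of \lemref{gadget}, which already constructs the gadget, bounds its size, guarantees it is built from Clifford gates acting on at most $d+2$ wires, and certifies fault-tolerance as soon as the underlying graph has edge expansion at least $1$. So the only thing left to supply is a good \emph{graph}. Concretely, it suffices to exhibit, for every $w$, an undirected graph $G = (V_G, E_G)$ with: (i) maximum degree at most $8$; (ii) $w \le |V_G| = O(w)$; (iii) $|E_G| = O(w)$; and (iv) edge expansion $\phi(G) \ge 1$. Feeding such a $G$ into \lemref{gadget} yields a postselection gadget for $P$ with gates on at most $d+2 \le 10$ wires and size $O(|E_G|) = O(w)$, which is exactly the statement. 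Since a degree-$8$ graph on $|V_G|$ vertices has $|E_G| = 4|V_G|$, property (iii) follows from (ii), so the work reduces to (i), (ii), (iv).

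For (iv) at such a small degree the natural tool is a Ramanujan graph. I would take an explicit family of $8$-regular Ramanujan graphs -- for instance Morgenstern's $(q+1)$-regular construction with field size $q = 7$, or the Lubotzky--Phillips--Sarnak graphs with $p = 7$ -- so that every nontrivial adjacency eigenvalue, and in particular the second-largest eigenvalue $\lambda_2$, satisfies $\lambda_2 \le 2\sqrt{7}$. The easy, combinatorial direction of the discrete Cheeger inequality states that for a $d$-regular graph the edge expansion is bounded below by half the Laplacian spectral gap, $\phi(G) \ge (d - \lambda_2)/2$ (one proves this directly: for $S$ with $|S|\le|V_G|/2$, writing $\mathbf 1_S = \tfrac{|S|}{|V_G|}\mathbf 1 + z$ with $z\perp\mathbf 1$ gives $|\partial S| = \mathbf 1_S^{\top}L\mathbf 1_S = z^{\top}Lz \ge (d-\lambda_2)\|z\|^2 \ge (d-\lambda_2)|S|/2$). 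Plugging in $d = 8$ and $\lambda_2 \le 2\sqrt 7$ gives $\phi(G) \ge 4 - \sqrt 7 > 1$, as required; and these graphs are $8$-regular, so (i) holds with $d+2 = 10$.

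To settle (ii), note that the vertex counts realized by the chosen family grow by at most a bounded multiplicative factor between consecutive members (and the smallest member has constant size), so for every $w$ the smallest member of the family on at least $w$ vertices has $O(w)$ vertices. (If one prefers to avoid even this mild point, one may instead invoke Friedman's theorem that a uniformly random $8$-regular graph on $n$ vertices has $\lambda_2 \le 2\sqrt 7 + o(1) < 6$ with high probability; such graphs exist for all $n$, which suffices since \thmref{ftgadget} only asserts existence.) The one thing to double-check when $|V_G| > w$ is that using only the first $w$ vertices as ``data vertices'' and the remainder as expansion ballast does no harm: the fault-tolerance argument inside \lemref{gadget} invokes the expansion hypothesis only for vertex subsets $U_G$ with $|U_G| = |E_X'| \le w/2 \le |V_G|/2$, which stays inside the range where $\phi(G)\ge 1$ is guaranteed.

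The main (and really only) obstacle is quantitative: we need the edge expansion to clear the \emph{absolute} threshold $1$, not merely be a positive constant. Because the Ramanujan bound $\lambda_2 \approx 2\sqrt{d-1}$ is essentially optimal (Alon--Boppana), the inequality $4 - \sqrt 7 \ge 1$ just barely holds at $d = 8$ and fails for the smaller available degrees; this is precisely what forces the ``$10$ wires'' bound. Everything else -- the gadget construction, its being a projector onto the $+1$ eigenspace of $P$, its Clifford structure, its $O(w)$ size, and its fault-tolerance -- is already delivered by \lemref{gadget} once the graph is in hand, so no further calculation is needed.
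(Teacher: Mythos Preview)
Your proposal is correct and matches the paper's own proof essentially line for line: reduce to \lemref{gadget}, supply a degree-$8$ Ramanujan graph (Morgenstern's construction over $\bbF_7$), invoke the easy direction of Cheeger to get edge expansion $\geq (8-2\sqrt 7)/2 > 1$, and use that consecutive graph sizes in the family differ by a bounded factor to obtain $|V_G| = O(w)$. The extra remarks you add (the Friedman-theorem alternative and the check that only subsets of size $\leq w/2 \leq |V_G|/2$ arise in the fault-tolerance argument) are correct but not needed beyond what \lemref{gadget} already handles.
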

\begin{proof}
For each $w' = 7^{6c} - 7^{2c}$ for some integer $c$, there exists a Ramanujan graph
with degree $d=8$ and $w'$ vertices \cite{Morgenstern199444}.  The
Ramanujan property means that the second largest eigenvalue of the
adjacency matrix is $\leq 2\sqrt{7}$.  By Cheeger's
inequality~\cite{HooryLW06}, this implies that the edge expansion is
$\geq \frac{8 - 2\sqrt{7}}{2} \geq 1.35$.
Choose $c$ to be the minimum that satisfies
$w' \geq w$. Then $w' \le 7^6w \leq O(w)$. 
Therefore, by Lemma \ref{lem:gadget}, there is a fault-tolerant postselection
gadget that acts on at most $d+2 = 10$ wires at a time, and with size at most
$O(w)$. 
\end{proof}

We now complete the proof of \thmref{main}.

\begin{proof}
For each stabilizer generator of weight $w_i$, use \thmref{ftgadget}
to construct a FT postselection gadget for that generator using
$O(w_i)$ qubits.  Concatenating these we obtain a FT error-detecting
circuit for the original stabilizer code, due to
\lemref{composability}.  Thus \thmref{dist-from-FT} implies that the localized code inherits
the distance of the original stabilizer code.
\end{proof}

\begin{remark}[Explicit bounds on constant factors]
	This construction is explicit enough that we can bound the implicit constants 
	in the big-$O$ notation for the gadgets.
	First we remark that while the specific expander graph construction used in the proof of \thmref{main} introduces a large constant factor, more efficient expander families are known to exist with varying levels of constructivity.
	In particular, any randomly chosen regular degree-6 graph has
        sufficient edge expansion with 
        probability $1-o_w(1)$~\cite{Bollobas1988}. 
	
	So, assuming knowledge of a degree-$d$, size-$w'$ edge
        expander, for each gadget measuring an operator of weight
        $w \le w'$, the final code requires (recalling that we must
        pad our wires to odd length) at most $3 w$ qubits for the data block,
        $3 w'$ qubits for the vertex block, and $3 w' d/2$ for the
        edge block, yielding a total of at most $3w'(d/2+2)$
        qubits.
        If we use the degree-8 explicit Ramanujan graphs
        from the proof of \thmref{main} then we have $d=8$ and $w'\leq
        7^6w$ for a total qubit cost of $\leq 2117682w$.  If we use
        degree-6 expanders of with $w$ vertices then the total qubit
        cost is $\leq 15w$, although then the construction is not
        explicit.  Indeed such expander graphs are known only to exist
        for sufficiently large $n$ (no explicit bound is given in
        \cite{Bollobas1988}), so technically we should say that the
        qubit cost is bounded by $\max(C, 16w)$ for some universal
        constant $C$.
	
	Each gauge generator acts on at most $(d+3)$ qubits at a time,
        and each qubit participates in at most
        $\lfloor (d+1)/2\rfloor+4$ gauge generators, as follows: The
        wire in the vertex block touches $d+3$ of the $2(d+2)$ local
        generators for the controlled gate.  Of these, $d+1$ come from
        back-propagation of $Z$ errors, while the other two come from
        an $X$ or $Z$ on the control line.  Half of the $d+1$ $Z$-type
        errors can be chosen to propagate ``backwards'' across the
        gadget instead, leaving one side to see
        $\lceil (d+1)/2\rceil+2$ generators and the other side to see
        $\lfloor (d+1)/2\rfloor+2$.  We can place the padding identity
        gate on the side with fewer generators, increasing the total
        number of gauge generators on that side by two for a total of
        $\lfloor (d+1)/2\rfloor+4$.  On the other side
        we place the initialization or postselection, adding just one
        more generator and also yielding a total of at most
        $\lfloor (d+1)/2\rfloor+4$.
\end{remark}


\section{Sparse quantum codes with improved distance and rate}\label{S:nonlocal}

Our Theorem~\ref{thm:main} implies that substantially better distance can be 
achieved with sparse subsystem codes than has previously been achieved. 
The following argument is based on applying our main theorem to concatenated families of 
stabilizer codes with good distance. It was suggested to us by Sergey Bravyi and we are 
grateful to him for sharing it with us.

To apply this argument, we must first have that codes with good distance exist. This is 
guaranteed by the quantum Gilbert-Varshamov bound, one version of which states that if 
$\sum_{j=0}^{d-1} \binom{n}{j} 3^j \leq 2^{n-k}$ then an $[n,k,d]$ quantum stabilizer code 
exists~\cite{Calderbank1996}. This argument chooses a random stabilizer code, and in 
general the generators will have high weight. 

\begin{theorem}[Restatement of Thm.~\ref{thm:catcodes}]\label{thm:catcodes2}
Quantum error correcting subsystem codes exist with gauge generators of weight $O(1)$ and 
minimum distance $d = n^{1-\epsilon}$ where $\epsilon = O(1/\sqrt{\log{n}})$.
\end{theorem}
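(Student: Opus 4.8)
The plan is to apply \thmref{main} to a suitable concatenated stabilizer code; this is the argument attributed above to Sergey Bravyi. First I would fix a universal constant $\delta>0$ small enough that $h(\delta)+\delta\log_2 3<1$, where $h$ is the binary entropy function; the quantum Gilbert--Varshamov bound stated above, applied with $k=1$, then provides, for every sufficiently large $n_0$, an $[[n_0,1,d_0]]$ stabilizer code $C_0$ encoding a single qubit with $d_0\ge\delta n_0$. I will take $n_0=2^L$ and let $L\to\infty$, and I write $c:=\log_2(1/\delta)$ for the associated constant.

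Next I would concatenate $C_0$ with itself $L$ times to obtain a stabilizer code $C^{(L)}$ with parameters $[[N,1,D]]$, where $N=n_0^L=2^{L^2}$ and, since the distance of a concatenated code is at least the product of the constituent distances, $D\ge d_0^L\ge(\delta n_0)^L=2^{L^2-cL}$. The natural generators of $C^{(L)}$ are organized by level: for each $j\in[L]$ and each of the $n_0^{L-j}$ blocks at level $j$, take the $n_0-1$ generators of the base code with each Pauli slot replaced by a fixed logical representative of the relevant level-$(j-1)$ subblock. Because a level-$(j-1)$ subblock has only $n_0^{j-1}$ physical qubits and each base generator has weight at most $n_0$, each level-$j$ generator has weight at most $n_0^j$, so
\be
\sum_i w_i \le \sum_{j=1}^{L} n_0^{L-j}(n_0-1)\,n_0^j < L\,n_0^{L+1} = L\,n_0\,N .
\ee
Applying \thmref{main} to $C^{(L)}$ produces a subsystem code with gauge generators of weight $O(1)$, one encoded qubit, distance $d=D\ge 2^{L^2-cL}$, and $n=O(N+\sum_i w_i)=O(L\,n_0\,N)$, so that $\log_2 n = L^2 + L + \log_2 L + O(1)$.

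Finally I would compare the exponents: since $\log_2 d \ge L^2-cL$,
\be
\frac{\log_2 d}{\log_2 n} \ge \frac{L^2-cL}{L^2+L+\log_2 L+O(1)} = 1 - O(1/L),
\ee
and because $\log_2 n = L^2(1+o(1))$ we have $L=\Theta(\sqrt{\log n})$, hence $d\ge n^{1-\epsilon}$ with $\epsilon=O(1/\sqrt{\log n})$; letting $L$ range over the positive integers yields an infinite family and proves the theorem. The Gilbert--Varshamov estimate and the weight bookkeeping for concatenated codes are routine; the delicate point -- and the one I would expect to be the main obstacle -- is the competition between the two sources of exponent loss, namely the factor-$\delta$ degradation of the distance at each of the $L$ concatenation levels (costing $cL$ in $\log_2 d$) and the circuit-to-code overhead of \thmref{main}, dominated by the $n_0$ factor in $\sum_i w_i$ (adding roughly $L$ to $\log_2 n$). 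Both contribute $\Theta(1/L)=\Theta(1/\sqrt{\log n})$ to $\epsilon$, and one should check that a different choice of base-code size $n_0=2^m$ merely redistributes weight between these two terms (the optimum being $m\approx cL$) without beating the $1/\sqrt{\log n}$ order, so this is the best the approach gives.
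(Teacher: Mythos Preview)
Your argument is correct and is essentially the same as the paper's: start from a Gilbert--Varshamov $[[n_0,1,\delta n_0]]$ code, concatenate it with itself, bound the total stabilizer weight by $m\,n_0^{m+1}$, apply \thmref{main}, and balance the distance loss from $\delta^m$ against the qubit overhead. The only cosmetic difference is that you hard-wire the coupling $n_0=2^L$, $m=L$ from the outset and then remark that the optimum $n_0\approx 2^{cL}$ doesn't change the order, whereas the paper leaves $n_0$ and $m$ free and optimizes $m=O(\sqrt{\log n/\log(1/\delta)})$ directly to get the leading constant $2\sqrt{\log(1/\delta)/\log n}$ (and mentions the improvement to $\sqrt{2\log(1/\delta)/\log n}$ from using homological product codes as base).
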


\begin{proof}
Begin with a base stabilizer code having parameters $[n_0, 1, d_0]$ where $d_0 = \delta n_0$ 
for some constant relative distance $\delta$. Call this code $C_0$. It has $n_0 -1$ 
independent stabilizer generators each with weight $\leq n_0$.

Let $C_m$ be the concatenation of $C_0$ with itself $m$ times. Then $C_m$ is a 
$[n_0^m, 1, d_0^m]$ code whose stabilizer generators can be classified according to 
their level in the concatenation hierarchy. There are $\le n_0^{m+1-j}$ generators at level $j$, 
each with weight $\le n_0^j$, for a total weight of $\le mn_0^{m+1}$. 

Next we apply Theorem \ref{thm:main} to $C_m$, which produces a subsystem code with 
$n \leq b mn_0^{m+1}$ physical qubits, one logical qubit, distance $d = d_0^m$, and 
$O(1)$-weight gauge generators, where $b$ is the implied constant from 
Theorem~\ref{thm:main}. We would like a bound of the form $d \ge n^{1-\eps}$ for $0<\eps$ 
as small as possible. Taking the log of this inequality gives
\begin{align}
	\frac{\log n -\log d}{\log n} \le \frac{\log n+m \log(b m) + m (1+m) \log\frac{1}{\delta}}{m \log n} \le \eps\,.
\end{align}
The best upper bound is obtained by choosing $m = O\Bigl(\sqrt{\frac{\log n}{\log 1/\delta}}\Bigr)$, 
and a straightforward calculation yields 
\begin{align}
	\eps = 2 \sqrt{\frac{\log \frac{1}{\delta}}{\log n}} + O\Bigl(\frac{\log \log n}{\log n}\Bigr) \,.
\end{align}

A slightly improved constant and stronger control of the subleading term can be obtained using 
homological product codes~\cite{BH13} as the base code for the construction. This lowers the 
total weight of the concatenated generators slightly to $\le O(n_0^{m+1/2})$. We can improve our estimate of $\eps$ to
\begin{align}
	\eps = \sqrt{\frac{2 \log \frac{1}{\delta}}{\log n}} + O\Bigl(\frac{1}{\log n}\Bigr) \,
\end{align}
by tracing through the previous argument.
\end{proof}

\section{Making Sparse Codes Local}\label{S:local}

We can use SWAP gates, identity gates, and some rearrangement to embed the
circuits from Theorem~\ref{thm:main} into $D$ spacetime dimensions so that all
gates become spatially local.
The codes constructed in this way are not just sparse, but also geometrically
local. This results in nearly optimal distances of
$\Omega\bigl(n^{1-1/D-\epsilon}\bigr)$ \cite{Bravyi2009}, as well as spatially
local codes that achieve $kd^{2/(D-1)} \ge \Omega(n)$ in $D=4$ dimensions.

We will lay out our circuits in $D-1$ spatial
dimensions and $1$ time dimension. The \emph{depth} of a circuit refers to its
length in the time dimension, and its \emph{cross section} is the volume of the smallest
$D-1$-dimensional spatial bounding box such that the bounding box contains 
every time slice of the circuit. 

The first step will be to establish a way to take sparse circuits
that consist of mutually commuting subcircuits and compactly arrange them into $D$ spacetime dimensions.
This
then makes it possible to show that fault-tolerant measurement gadgets
can be embedded compactly, and finally to construct error-detecting circuits
that lead to spatially local codes. We begin by extending our notion of
sparsity from codes to circuits and subcircuits.

\begin{definition}
Let $M$ be a set of mutually commuting subcircuits that act on some set of
data wires. $M$ is \emph{$k$-sparse} for some $k$ if each subcircuit in $M$
acts on at most $k$ data wires, and each data wire is acted on by at most $k$
subcircuits of $M$.
\end{definition}

\begin{lemma}\label{lem:permutations}
Suppose there are $n$ data wires and a $k$-sparse set $M$ of commuting subcircuits 
that each have depth at most $h$ and cross section $O(1)$ when embedded in $D-1$
spatial dimensions. Then there is a circuit that is local in $D-1$ spatial
dimensions that enacts all subcircuits in $M$, using depth
$O\bigl(h+n^{1/(D-1)}\bigr)$ and cross section $O(n)$, treating $D$ and $k$ as
constants.
\end{lemma}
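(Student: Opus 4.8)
The plan is to split $M$ into $O(1)$ classes of subcircuits acting on pairwise disjoint data wires, and then handle each class by a ``gather, execute, scatter'' routine whose cost is dominated by routing a permutation on a $(D-1)$-dimensional grid.

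\emph{Colouring $M$.} First I would build the conflict graph on vertex set $M$, joining two subcircuits whenever they touch a common data wire. Because each subcircuit acts on at most $k$ data wires and each data wire is acted on by at most $k$ subcircuits, this graph has maximum degree below $k^2$, so greedily it is $(k^2+1)$-colourable. With $k$ and $D$ treated as constants this gives $C=O(1)$ colour classes $M_1,\dots,M_C$, each an independent set: its members act on pairwise disjoint sets of data wires, and since each member occupies at least one of the $n$ data wires, $|M_c|\le n$.

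\emph{Layout and the gather/scatter routine.} I would lay out, on a $(D-1)$-dimensional grid, the $n$ data wires together with one ``pocket'' of $O(1)$ free grid sites for each of the at most $n$ subcircuits that a single colour class can contain; the pocket size is chosen to be the (constant) bound on the spatial footprint of a subcircuit of $M$. The grid thus has $O(n)$ sites, hence side length $O(n^{1/(D-1)})$ and cross section $O(n)$. For each colour $c=1,\dots,C$ in turn, I would: (i) apply a spatially local circuit of SWAP gates realising a permutation of the grid that moves the (at most $k$) data wires of each subcircuit in $M_c$ into the sites of a dedicated pocket, the remaining pocket sites hosting that subcircuit's ancillas --- this is a well-defined injective assignment precisely because the members of $M_c$ are wire-disjoint, and any injection extends to a permutation; (ii) execute all subcircuits of $M_c$ simultaneously inside their pockets, which is legal because they are wire-disjoint and each embeds with depth $\le h$ and cross section $O(1)$; (iii) run the SWAP circuit of step (i) in reverse to restore every wire to its home site. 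Since the subcircuits all commute, performing the classes in this order enacts exactly the product of all of $M$, as required, and the cross section never exceeds $O(n)$.

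\emph{Depth, and the main obstacle.} The one substantive ingredient is that an arbitrary permutation of an $\ell^{D-1}$-site mesh can be implemented by a spatially local SWAP circuit of depth $O(\ell)$. For $D-1=1$ this is odd--even transposition sorting; for general constant dimension it follows from the classical decomposition of a mesh permutation into $2(D-1)-1$ phases, each permuting independently along a single axis (a Hall/Birkhoff-type matching argument), every phase costing depth $\le \ell$. Granting this with $\ell=O(n^{1/(D-1)})$, steps (i) and (iii) cost $O(n^{1/(D-1)})$, step (ii) costs $O(h)$, and summing over the $C=O(1)$ colours yields total depth $O\bigl(h+n^{1/(D-1)}\bigr)$. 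The remaining details --- sizing the pockets and the grid so that everything fits, and checking that the SWAP routing is genuinely spatially local --- are routine.
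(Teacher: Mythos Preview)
Your proposal is correct and follows essentially the same approach as the paper: the paper also colours the conflict graph on $M$ (max degree $O(k^2)$, hence $O(k^2)$ colour classes), lays out the wires on a $(D-1)$-dimensional hypercube of side $O(n^{1/(D-1)})$, and invokes a mesh-permutation routing result (citing Thompson 1977) to bring wires into position before each layer at depth $O(n^{1/(D-1)})$. Your version is slightly more explicit about reserving pocket space for ancillas and about scattering back after each class, but these are elaborations of the same argument rather than a different route.
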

\begin{proof}
First we split the commuting subcircuits into $O(k^2)$ layers such that each
layer can be enacted in parallel on disjoint data wires. Consider the
graph with vertex set $M$ and with an edge between each pair of subcircuits
that share some common data wire that they both act on. Since $M$ is sparse,
the maximum degree in this graph is $O(k^2)$. A greedy coloring of this graph 
uses $O(k^2)$ colors, so we simply use each color as a separate layer.

We arrange the data wires arbitrarily in a hypercube lattice of side length
$O(n^{1/(D-1)})$ so that the cross section of the circuit is $O(n)$.
Before each layer of subcircuits, it will be necessary to permute the data
wires to be within reach of the subcircuits that act on them. Such permutation
circuits exist, using only nearest-neighbor SWAP gates, with depth at most
$O\bigl(n^{1/(D-1)}\bigr)$ \cite{Thompson1977}.

Hence the circuit contains $k^2$ layers, each of which consists of a
permutation step with depth $O\bigl(n^{1/(D-1)}\bigr)$ and a
subcircuit-enacting step with depth $h$.
\end{proof}

\begin{lemma}\label{lem:localftmeas}
A fault-tolerant measurement gadget for weight $w$ can be embedded in
$D-1$ spatial dimensions with depth $O\bigl(w^{1/(D-1)}\bigr)$ and cross section $O(w)$.
\end{lemma}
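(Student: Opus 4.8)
The plan is to apply \lemref{gadget} together with \lemref{permutations} to embed the fault-tolerant postselection gadget from \thmref{ftgadget} compactly. Recall from \lemref{gadget} that for a weight-$w$ Pauli $P$ the gadget consists of three blocks of wires: a data block of $w$ wires, a vertex block with $|V_G|=O(w)$ wires, and an edge block with $|E_G|=O(w)$ wires, where $(V_G,E_G)$ is a constant-degree expander. The only nontrivial gates are the parallel controlled-$P_i$ operations in step (2), one per vertex qubit $i$: vertex qubit $i$ controls a $P_i$ on data qubit $i$ and CNOTs onto the $d+1$ incident edge qubits. Since all these controlled operations commute (they are all controlled by distinct vertex qubits reading in the computational basis, and the targets are disjoint up to the shared structure), we have a collection of $w$ mutually commuting subcircuits, each acting on at most $d+2=O(1)$ wires, and each data/edge wire touched by $O(1)$ of them. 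Thus the set of step-(2) subcircuits is $O(1)$-sparse in the sense of the preceding definition.

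The key steps, in order: First, set up the $O(w)$ wires of the gadget (data, vertex, edge blocks) as the ``data wires'' in the language of \lemref{permutations}, noting $n=O(w)$ total. Second, observe that each individual controlled-$P_i$-plus-CNOTs subcircuit has depth $O(1)$ (it is a single parallel gate, or a constant-depth decomposition thereof into gates on $\le d+2$ wires) and cross section $O(1)$ when its $O(1)$ participating wires are brought adjacent. Third, invoke \lemref{permutations} with $h=O(1)$ and this $O(1)$-sparse commuting set $M$ to obtain a circuit local in $D-1$ spatial dimensions enacting all of step (2), with depth $O(h + w^{1/(D-1)}) = O(w^{1/(D-1)})$ and cross section $O(w)$. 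Fourth, prepend the initializations of step (1) and append the postselections of step (3); these are single-qubit operations applied in parallel, contributing only $O(1)$ additional depth and not changing the cross section. The composite circuit is still a fault-tolerant postselection gadget for $P$ by \lemref{gadget} (the spatial rearrangement and SWAP-routing does not change which unitary/postselection the circuit implements, and fault-tolerance is a property of the abstract circuit, which is unchanged up to relabeling of wires at each time slice—the SWAPs themselves are Clifford gates on $2$ wires and their local gauge generators do not spoil the error-propagation argument).

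The main obstacle I anticipate is the last point: one must check that inserting SWAP-network routing layers between the commuting subcircuits does not damage fault-tolerance. A SWAP gate is a Clifford gate on two wires, so by \defref{FT} and the composability \lemref{composability}, a layer of SWAPs is trivially a fault-tolerant subcircuit (a SWAP permutes a Pauli error to another Pauli error of the same weight on its inputs), and the whole routed gadget is a composition of fault-tolerant subcircuits—the original gadget's internal structure, broken into its coloring layers, interleaved with SWAP layers. Hence \lemref{composability} applies and fault-tolerance is preserved. One should also double-check the cross-section bookkeeping: the three blocks each have $O(w)$ wires, so laid out in a $(D-1)$-dimensional hypercube of side $O(w^{1/(D-1)})$ the cross section is $O(w)$ as claimed, and the depth is dominated by the routing cost $O(w^{1/(D-1)})$ plus the $O(1)$ gate depth, giving the stated bounds.
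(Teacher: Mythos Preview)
Your approach is essentially the same as the paper's: both arrange the $O(w)$ wires of the gadget in a $(D-1)$-dimensional hypercube and invoke \lemref{permutations} on the commuting multi-target controlled gates of step~(2), obtaining depth $O(w^{1/(D-1)})$ and cross section $O(w)$, with the single-qubit initializations and postselections contributing only $O(1)$ depth.

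You go further than the paper in trying to justify that fault-tolerance survives the SWAP routing (the paper's proof does not address this). However, your specific composability argument does not work as stated: the individual coloring layers of multi-target CNOTs are \emph{not} fault-tolerant subcircuits in the sense of \defref{FT}. For instance, a single $Z$ on a target wire at the output of a CNOT, pushed back to the input, becomes $Z_cZ_t$ and has weight $2$. So you cannot decompose the routed gadget as ``SWAP layer, coloring layer, SWAP layer, \dots'' and invoke \lemref{composability} piecewise. The correct argument is closer to your earlier parenthetical: SWAP and identity gates preserve Pauli weight under conjugation, so every error occurring during a routing layer can be pushed (with no weight change) to the time slice of an adjacent ``real'' gate layer; after this, the error pattern is supported only on the original gadget's gate times, up to the wire relabeling induced by the SWAPs, and the fault-tolerance of the full gadget from \lemref{gadget} (not of its pieces) then yields the desired $|E'|\le |E|$.
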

\begin{proof}
Recall that the sparse version of the gadgets consists of a data block, a cat
block, and a parity block (Section~\ref{sec:ft-gadgets} and Fig.~\ref{fig:ftmeas}). 
To achieve locality, all of these blocks, which contain a total of $O(w)$ wires 
including pre- and postselections, can be arranged arbitrarily in hypercubic 
lattices of side length $O\bigl(w^{1/(D-1)}\bigr)$ and 
then spatially interlaced so that each data wire is next to a corresponding
cat wire and a few parity wires. Overall the cross section is still $O(w)$.


The multi-target CNOTs in the gadgets all commute with
each other, so Lemma $\ref{lem:permutations}$ applies, with each gate as a
separate subcircuit. This again takes only $O\bigl(w^{1/(D-1)}\bigr)$ time
steps, so that the overall depth is $O\bigl(w^{1/(D-1)}\bigr)$.
\end{proof}

\begin{theorem}[Restatement of Thm.~\ref{thm:localcodes}]\label{thm:localcodes2}
Spatially local subsystem codes exist in $D\ge 2$ dimensions with gauge generators of weight 
$O(1)$ and minimum distance $d = n^{1-\eps-1/D}$ where $\eps = O(1/\sqrt{\log{n}})$.
\end{theorem}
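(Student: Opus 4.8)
The plan is to run the argument of \thmref{catcodes2} --- feed a concatenated code into \thmref{main} --- but to embed the associated fault-tolerant error-detecting circuit into $D-1$ spatial dimensions plus one time dimension before applying the circuit-to-code map, so that the final code is geometrically local in $(D-1)+1=D$ dimensions. Start from a Gilbert--Varshamov base code $C_0$ with parameters $[n_0,1,d_0]$ and $d_0=\delta n_0$ for a constant $\delta$, and let $C_m$ be its $m$-fold self-concatenation, an $[n_0^m,1,(\delta n_0)^m]$ code. Its stabilizer generators fall into levels $j=1,\dots,m$: at level $j$ there are $\approx n_0^{m-j}$ mutually disjoint blocks, each carrying $n_0-1$ generators of weight at most $n_0^j$. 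For each generator build the fault-tolerant postselection gadget of \thmref{ftgadget}; composing all of them (grouped by level) gives a good error-detecting circuit for $C_m$ in the sense of \defref{error-detect}, exactly as in the proof of \thmref{main}. Pad the layout with SWAP and identity gates so that every gate is spatially local; since SWAP and identity subcircuits are trivially fault-tolerant (\defref{FT}) and the gadgets are fault-tolerant by \thmref{ftgadget}, the padded circuit is fault-tolerant by \lemref{composability}, so \thmref{dist-from-FT} produces a subsystem code with $k=1$, distance $d=(\delta n_0)^m$, and $O(1)$-weight gauge generators. Because each gate is $O(1)$-local in the $D$-dimensional spacetime, every gauge generator is supported in an $O(1)$-diameter region, so the code is geometrically local in $D$ dimensions.

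\noindent\textbf{Resource accounting.}
Write $N:=n_0^m$ for the number of data wires, and lay them out in a $(D-1)$-dimensional cube of side $O(N^{1/(D-1)})$. Process the generators level by level. For level $j$, first apply a nearest-neighbor SWAP network --- as in \lemref{permutations}, using \cite{Thompson1977} --- of depth $O(N^{1/(D-1)})$ that makes the $\le n_0^j$ data wires of each level-$j$ block contiguous. Then allot to each of the $\approx n_0^{m-j}$ blocks a disjoint region of $O(n_0^j)$ qubits, and run its $n_0-1$ gadgets serially \emph{within} a block (the controlled gates of distinct gadgets that share a data wire need not commute) but in parallel \emph{across} blocks, embedding each gadget via \lemref{localftmeas} with depth $O\bigl((n_0^j)^{1/(D-1)}\bigr)$ and cross section $O(n_0^j)$ and reusing the ancilla region across the $n_0-1$ serial steps. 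Each level then uses depth $O\bigl(n_0 N^{1/(D-1)}\bigr)$ and cross section $O(N)$, and reusing the same spatial region for all $m$ levels yields a circuit of depth $O\bigl(m\, n_0\, N^{1/(D-1)}\bigr)$ and cross section $O(N)$. Hence the number of physical qubits is at most depth times cross section,
\be
  n \;=\; O\!\left(m\, n_0\, N^{\,1+1/(D-1)}\right)
    \;=\; O\!\left(m\, n_0^{\,1+mD/(D-1)}\right),
\ee
and every gate touches only $O(1)$ wires (SWAPs, initializations, postselections, and gadget gates on at most $10$ wires), so the gauge generators have weight $O(1)$ and each qubit participates in $O(1)$ of them.

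\noindent\textbf{Optimizing $n_0$ and $m$.}
Taking logarithms of $d=(\delta n_0)^m$ and of the bound for $n$, and using the latter to solve $\log n_0\approx\tfrac{D-1}{mD}\log n$, a short computation gives
\be
  \eps \;:=\; 1-\tfrac1D-\frac{\log d}{\log n}
  \;=\; \frac{(D-1)^2}{D^2\, m}\;+\;\frac{m\,\log(1/\delta)}{\log n}\;+\;O\!\left(\frac{\log\log n}{\log n}\right).
\ee
Choosing $m=\Theta\!\bigl(\sqrt{\log n/\log(1/\delta)}\bigr)$ balances the first two terms and gives $\eps=\tfrac{2(D-1)}{D}\sqrt{\log(1/\delta)/\log n}+O(\log\log n/\log n)=O(1/\sqrt{\log n})$, so that $d=n^{1-1/D-\eps}$, as claimed. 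As in \thmref{catcodes2}, replacing the Gilbert--Varshamov base code by a homological-product base code lowers the total generator weight slightly and tightens both the leading constant and the subleading term.

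\noindent\textbf{The main obstacle.}
The conceptual content is light --- everything reduces to \thmref{main} plus a spatial embedding --- so the real work is the resource accounting of that embedding. The step I expect to be hardest is simultaneously controlling the depth and the cross section: the depth bound $O\bigl(m\, n_0\, N^{1/(D-1)}\bigr)$ needs the extra factor $n_0$ that comes from serializing the $n_0-1$ generators of a block (since their controlled gates need not commute) together with the $O(N^{1/(D-1)})$ cost of rearranging the data wires between levels, while the cross section must stay $O(N)$, which requires reusing each block's ancilla region both across the serial gadget steps within a level and across all $m$ levels and tiling \lemref{localftmeas}'s single-gadget embeddings side by side without blow-up. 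One must also verify that the SWAP/identity padding leaves the circuit a \emph{good} error-detecting circuit (\defref{error-detect}) and fault-tolerant, so that \thmref{dist-from-FT} applies verbatim.
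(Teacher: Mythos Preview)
Your proposal is correct and follows the same high-level strategy as the paper: concatenate a good base code $m$ times, embed the fault-tolerant error-detecting circuit locally in $D-1$ spatial dimensions plus one time dimension, and apply the circuit-to-code map via \thmref{dist-from-FT}. The difference lies in how the spatial embedding is carried out. The paper exploits the recursive structure of concatenation directly: it lays out the $n_0^m$ physical qubits in a self-similar ``Sudoku'' pattern in which each level-$j$ cell is a $(D{-}1)$-cube of side $O\bigl(n_0^{j/(D-1)}\bigr)$ built from $n_0$ level-$(j{-}1)$ cells, so that every level-$j$ stabilizer is already geometrically confined to its own cell and no inter-level permutation is needed; the depth at level $j$ is then just the $O\bigl(n_0^{j/(D-1)}\bigr)$ coming from \lemref{localftmeas}, and the total spacetime volume is dominated by the last level, $O\bigl(N^{D/(D-1)}\bigr)$. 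You instead place the data wires arbitrarily and invoke a global depth-$O\bigl(N^{1/(D-1)}\bigr)$ SWAP network (via \lemref{permutations}) before each level to gather each block's qubits, giving total depth $O\bigl(m\,n_0\,N^{1/(D-1)}\bigr)$. Both routes yield the same leading-order $\eps = O(1/\sqrt{\log n})$; your extra factor of $m$ in the volume is swallowed by the $O(\log\log n/\log n)$ subleading term. Your approach is more mechanical (it reuses \lemref{permutations} off the shelf) and is more explicit about two points the paper leaves implicit --- the factor $n_0$ from serializing the $n_0{-}1$ overlapping generators within a block, and the final optimization over $m$ and $n_0$ --- while the paper's hierarchical layout is tighter and makes the geometry more transparent by never moving data qubits at all.
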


\newcommand{\tth}{^\text{th}}
\begin{proof}
Following the outline of the proof of Theorem~\ref{thm:catcodes}, we start with a base 
code $C_0$ with parameters $[n_0, 1, d_0]$ and concatenate it $m$ times to obtain 
$C_m$, a code with parameters $[n_0^m, 1, d_0^m]$. The $j\tth$ level of
concatenation will consist of $n_0^{m-j}$ different ``cells'' of qubits, each
of which supports precisely one copy of the base code at the appropriate level of
concatentation. Each cell exposes one
virtual qubit to the next level of concatenation, and at the same time is
composed of $n_0$ different virtual qubits 
from the previous level. We arrange the base code as a $(D-1)$-dimensional
cube of side length $O\bigl(n_0^{1/(D-1)}\bigr)$. Each cell in the
concatenation then uses that same spatial layout, but with each qubit of the
base code replaced by a cell from the previous level.
Thus the linear length scale increases by a factor of $O\bigl(n_0^{1/(D-1)}\bigr)$ at
each successive level of concatenation, such that the overall side length is
$O\bigl(n^{1/(D-1)}\bigr)$ and the overall cross section is $O(n)$. This
creates a recursive, self-similar pattern that resembles a Sudoku puzzle
layout in the example of $D=2$ spatial dimensions.

We arrange the stabilizer postselections into $m$ layers through the time
dimension, with each layer corresponding to a level of concatenation.
Within the $j\tth$ layer, each of the $j\tth$-level cells postselects their
stabilizers at the same time; this can be done in parallel since all of the
stabilizers are contained within their own cells.
There are at most $n_0$ stabilizers to be measured per cell, each of which have
weight at most $n_0^j$. By Lemma \ref{lem:localftmeas}, measuring these will
use depth $O(n_0^{j/(D-1)})$ while keeping the cross section at $O(n_0^j)$ per cell, or
$O(n)$ total.

Summing over all the layers of concatenation, the dominant contribution is from
the final $m$th layer, with a volume of $O\bigl(n^{1+1/(D-1)}\bigr)$. We can
repeat the argument from Theorem~\ref{thm:catcodes}, and choosing
$m = O(\sqrt{\log n})$ we find the stated scaling claim. 
\end{proof}

This same idea can be used to take existing codes that are sparse but not embeddable 
in low-dimensional spaces without large distortion and turn them into subsystem 
codes that are still sparse, but spatially local in $D$ dimensions for constant $D$. 
The next theorem does exactly this, and with some interesting consequences.

For the class of \emph{local commuting projector codes} in $D$ dimensions, 
Bravyi, Poulin, and Terhal have proven~\cite{Bravyi2010a} the following inequality
\begin{align}\label{eq:bptbound}
	k d^{2/(D-1)} \le O(n) \,.
\end{align}
This class of codes is more general than stabilizer codes, but does not include subsystem 
codes, so this bound does not apply to our construction. Nonetheless, it is interesting for 
comparison. The next theorem shows that any analogous bound for subsystem codes cannot 
improve beyond constant factors, and also that improving the existing bound would require 
using structure that isn't present in general subsystem codes.

\begin{theorem}\label{thm:kdbound}
For each $D \ge 2$ there exist spatially local sparse subsystem codes with parameters
$k \ge \Omega\bigl(n^{1-1/D}\bigr)$ and $d \ge \Omega\bigl(n^{(1-1/D)/2}\bigr)$.
In particular, $k d^{2/(D-1)} \ge \Omega(n)$.
\end{theorem}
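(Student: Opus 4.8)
The plan is to instantiate \thmref{localcodes2} with a carefully chosen base code rather than a constant-rate, single-logical-qubit code: instead of concatenating a $[n_0,1,d_0]$ code, I would run the same self-similar ``Sudoku'' layout argument starting from a base code with many logical qubits, so that the rate survives the concatenation. Concretely, take as the innermost building block a good stabilizer code $C_0$ with parameters $[n_0, k_0, d_0]$ where $k_0 = \Theta(n_0)$ and $d_0 = \Theta(n_0)$ — such codes exist by the quantum Gilbert--Varshamov bound, exactly as invoked before \thmref{catcodes2}. The point is that a single application of \thmref{main} to $C_0$ already yields a sparse subsystem code with $k = \Theta(k_0)$ and $d = \Theta(d_0)$ on $n = O(n_0 + \sum_i w_i) = O(n_0^2)$ qubits (since the $w_i \le n_0$ and there are $O(n_0)$ of them). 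Thus from the non-local construction alone we get a sparse subsystem code with $k,d = \Theta(\sqrt n)$, i.e.\ $kd \ge \Omega(n)$.

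Next I would make this spatially local in $D$ dimensions using exactly the machinery of \lemref{localftmeas} and \lemref{permutations}: lay the $O(n_0)$ data wires of $C_0$ out in a $(D-1)$-dimensional hypercube of side length $O(n_0^{1/(D-1)})$, and embed each of the $O(n_0)$ fault-tolerant measurement gadgets (each of weight $\le n_0$) locally, using \lemref{localftmeas} to keep each gadget's depth $O(n_0^{1/(D-1)})$ and cross section $O(n_0)$. Because the gadgets for distinct stabilizer generators commute, \lemref{permutations} packs them into a local circuit of depth $O(n_0^{1/(D-1)})$ and cross section $O(n_0)$; the resulting spacetime circuit has volume $O(n_0 \cdot n_0^{1/(D-1)}) = O(n_0^{1 + 1/(D-1)})$, which is the physical qubit count $n$ of the final local code. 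Tracking the exponents: $n = \Theta(n_0^{D/(D-1)})$, so $n_0 = \Theta(n^{(D-1)/D}) = \Theta(n^{1-1/D})$. Since $k = \Theta(k_0) = \Theta(n_0) = \Theta(n^{1-1/D})$ and $d = \Theta(d_0) = \Theta(n_0) = \Theta(n^{1-1/D})$, we actually get $d \ge \Omega(n^{1-1/D})$, even better than the claimed $\Omega(n^{(1-1/D)/2})$; I would then just state the weaker bound as in the theorem (or note the stronger one), and compute $k d^{2/(D-1)} \ge \Omega\big(n^{1-1/D} \cdot n^{2(1-1/D)/(D-1)}\big) = \Omega\big(n^{(1-1/D)(1 + 2/(D-1))}\big) = \Omega\big(n^{(1-1/D)\cdot(D+1)/(D-1)}\big)$; a short check shows $(1-1/D)(D+1)/(D-1) = (D+1)/D \ge 1$, so $kd^{2/(D-1)} \ge \Omega(n)$ as claimed — in fact with a small amount of slack, which is consistent with the theorem asking only for $\Omega(n)$.

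The main obstacle, and the step I would be most careful about, is the resource accounting in the local embedding: I need the cross section to stay $O(n_0)$ (not blow up by polylog or by the number of gadgets) while simultaneously keeping all gauge generators weight $O(1)$ and spatially local, and I need to confirm that \lemref{permutations} and \lemref{localftmeas} compose cleanly when applied once at top level rather than recursively. In particular I must check that interlacing the data/vertex/edge blocks of $O(n_0)$ gadgets inside one hypercube still gives each wire only $O(1)$ neighbors of each relevant type, so that the SWAP-network permutation depth from \cite{Thompson1977} is genuinely $O(n_0^{1/(D-1)})$ and the final generators are still geometrically local after all the SWAPs are themselves turned into code elements via the dictionary in Table~\ref{T:dictionary}. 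Once that bookkeeping is pinned down, the exponent arithmetic above is routine and the $kd^{2/(D-1)} \ge \Omega(n)$ conclusion follows immediately.
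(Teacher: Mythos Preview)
Your approach has a genuine gap at exactly the step you flagged as delicate: the invocation of \lemref{permutations}. That lemma requires the set $M$ of subcircuits to be $k$-sparse with $k=O(1)$, meaning each subcircuit touches $O(1)$ data wires and each data wire is touched by $O(1)$ subcircuits. A random Gilbert--Varshamov code has stabilizer generators of weight $\Theta(n_0)$ and each data qubit lies in $\Theta(n_0)$ generators, so the measurement gadgets form an $\Theta(n_0)$-sparse set, not an $O(1)$-sparse one. Tracing through the proof of \lemref{permutations}, you would need $\Theta(n_0)$ sequential layers (the gadgets overlap on essentially all data wires, so almost no parallelism is possible), each of depth $\Theta(n_0^{1/(D-1)})$, giving total depth $\Theta(n_0^{1+1/(D-1)})$ rather than $\Theta(n_0^{1/(D-1)})$. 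The resulting circuit volume is $\Theta(n_0^{(2D-1)/(D-1)})$, so $n_0 = \Theta(n^{(D-1)/(2D-1)})$ and $k d^{2/(D-1)} = \Theta(n^{(D+1)/(2D-1)})$, which is $o(n)$ for every $D\ge 3$. Even the more optimistic accounting (ignoring all locality overhead beyond the $\Theta(n_0^2)$ qubits forced by $\sum_i w_i$) gives $k,d = \Theta(\sqrt n)$ and $kd^{2/(D-1)} = \Theta(n^{1/2+1/(D-1)})$, which fails for $D\ge 4$.

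The paper avoids this by choosing as the base code the Tillich--Z\'emor hypergraph-product code~\cite{TZ09}, which is already LDPC: its stabilizer generators have constant weight and each qubit participates in $O(1)$ of them. Hence the postselection gadgets are constant-sized and genuinely form an $O(1)$-sparse family, so \lemref{permutations} applies directly with $h=O(1)$ and yields depth $O(n_0^{1/(D-1)})$, cross section $O(n_0)$, and circuit volume $O(n_0^{D/(D-1)})$. The price is that the Tillich--Z\'emor code has only $d_0=\Theta(\sqrt{n_0})$ rather than $d_0=\Theta(n_0)$; this is precisely why the theorem claims $d=\Omega(n^{(1-1/D)/2})$ and not the stronger $d=\Omega(n^{1-1/D})$ your argument was aiming for. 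The sparsity of the base code, not its distance, is the load-bearing property here.
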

\begin{proof}
We start with the $[n_0, \Omega(n_0), \Omega(\sqrt{n_0})]$ quantum LDPC code 
due to Tillich and Z\'{e}mor~\cite{TZ09}. This is a sparse stabilizer code, so in its 
syndrome-detecting circuit the stabilizer-postselection gadgets form a sparse set of
mutually commuting subcircuits that are each of bounded size. Hence 
Lemma~\ref{lem:permutations} applies, and yields a spatially local 
Clifford circuit with size $O\bigl(n_0^{1+1/(D-1)}\bigr)$.

Converting this circuit into a code then gives us the parameters
$[n, \Omega\bigl(n^{1-1/D}\bigr), \Omega\bigl(n^{(1-1/D)/2}\bigr)]$ in a
$D$-dimensional spatially local code.
\end{proof}


\section{Discussion}\label{S:conclusion}

The construction presented here leaves numerous open questions.

We have not addressed the important issue of efficient decoders for these codes. It seems 
likely that the subsystem code can be decoded efficiently if the base code can, but we have 
not yet checked this in detail and leave this to future work. One potential stumbling block is 
that the subsystem code requires measuring gauge generators which must be multiplied 
together to extract syndrome bits. Since the stabilizers for our subsystem codes are in general 
highly nonlocal and products of many gauge generators, this might lead to difficulties in 
achieving a fault-tolerant decoder in the realistic case of noisy measurements.

Another open question is whether the distance scaling of Theorems~\ref{thm:catcodes} 
and~\ref{thm:localcodes} can be extended to apply also to $k$ to some degree. 
Improving the fault-tolerant gadgets or using specially designed base 
codes seem like obvious avenues to try to improve on our codes. 
Conversely, extending the existing upper 
bounds by Bravyi~\cite{Bravyi2011b} to $D>2$, as well as extending the bound from 
Bravyi, Poulin, and Terhal~\cite{Bravyi2010a} to subsystem codes would be also 
be interesting.  If the bound in Eq.~(\ref{eq:bptbound}) extended to subsystem codes, 
then the scaling in Theorem~\ref{thm:kdbound} would be tight; alternatively, it is
possible that spatially local subsystem codes can achieve asymptotically better parameters
than spatially local stabilizer codes.

It seems plausible that the base codes for our construction could be extended to 
include subsystem codes. However, it is not known if this would imply any improved code parameters.

It is still an open question whether any distance greater than $O(\sqrt{n\log n})$ can be 
achieved for stabilizer codes with constant-weight generators. If an upper bound on the 
distance for such stabilizer codes were known, then it could imply an asymptotic separation 
between the best distance possible with stabilizer and subsystem codes with constant-weight 
generators, like the separation for spatially local codes in $D=2$ dimensions~\cite{Bravyi2011b}.

Another open question is whether the recent methods of Gottesman~\cite{Gottesman13} for 
using sparse codes in fault-tolerant quantum computing (FTQC) schemes can be modified to 
work with subsystem codes. If they could, then improving our scaling with $k$ would imply that 
FTQC is possible against adversarial noise at rate $R =\exp(-c\sqrt{\log n})$.
Even if not, subsystem codes have in the past proven useful for
FTQC~\cite{AC07} and we are hopeful our codes might assist in further developments of
FTQC.

Finally, it is tempting to speculate on the ramifications of these codes for 
self-correcting quantum memories. The local versions of our codes in 3D have no string-like 
logical operators. To take advantage of this for self-correction, we need a local Hamiltonian 
that has the code space as the (at least quasi-) degenerate ground space and a favorable 
spectrum. 
The underlying code should also have a threshold against random errors~\cite{Pastawski2009}.
The obvious choice of Hamiltonian is minus the sum of the gauge generators, but 
this will not be gapped in general. Indeed, the simplest example of a Clifford circuit -- a wire of 
identity gates -- maps directly onto the quantum $XY$ model, which is gapless when the 
coupling strengths are equal~\cite{Lieb1961}, but somewhat encouragingly is otherwise 
gapped and maps onto Kitaev's proposal for a quantum wire~\cite{Kitaev2001}. Other models 
of subsystem code Hamiltonians exist; some are 
gapped~\cite{Brell2011, Bravyi2012a, Brell2014} and some are 
not~\cite{Bacon2006, Dorier2005}. Addressing the lack of a satisfying general theory of gauge 
Hamiltonians is perhaps a natural first step in trying to understand the power of our 
construction in the quest for a self-correcting memory. 

\acknowledgments

We thank Sergey Bravyi for suggesting the argument in Theorem~\ref{thm:catcodes2}, 
Larry Guth for explanations about \cite{GL13}, Jeongwan Haah for
explaining the distance bounds for the cubic code, David Poulin for discussions and 
to Andrew Doherty, David Long, and an 
anonymous reviewer who found serious errors in a previous version.
DB was supported by the NSF under Grants No.~0803478, 0829937, and 0916400 and 
by the DARPA-MTO QuEST program through a grant from AFOSR. 
STF was supported by the IARPA MQCO program, 
by the ARC via EQuS project number CE11001013, 
by the US Army Research Office grant numbers W911NF-14-1-0098 and W911NF-14-1-0103, 
and by an ARC Future Fellowship FT130101744. 
AWH was funded by NSF grants CCF-1111382 and CCF-1452616 and ARO contract
W911NF-12-1-0486.
JS was supported by the Mary Gates Endowment, Cornell University Fellowship,
and David Steurer's NSF CAREER award 1350196.

\end{document}